\newcommand{\ket}[1]{{| {#1} \rangle}}
\newcommand{\bra}[1]{{\left\langle {#1} \right|}}
\newcommand{\ketbra}[2]{{\left| {#1} \right\rangle \!\!\left\langle {#2} \right|}}
\newcommand{\ignore}[1]{}
\DeclareMathOperator{\tr}{tr}
\DeclareMathOperator{\sgn}{sgn}
\newtheorem{proposition}{Proposition}
\newtheorem{definition}{Definition}
\newtheorem{conjecture}{Conjecture}
\begin{document}

\title{Polytopes of Absolutely Wigner Bounded Spin States}
\author{J\'{e}r\^{o}me Denis}
\affiliation{Institut de Physique Nucl\'{e}aire, Atomique et de Spectroscopie, CESAM, University of Li\`{e}ge, B-4000 Li\`{e}ge, Belgium}
\author{Jack Davis}
\affiliation{Department of Physics and Astronomy, University of Waterloo, Waterloo, Ontario, Canada N2L 3G1}
\affiliation{Institute for Quantum Computing, University of Waterloo, Waterloo, Ontario, Canada N2L 3G1}
\affiliation{DIENS, \'Ecole Normale Sup\'erieure, PSL University, CNRS, INRIA, 45 rue d'Ulm, Paris 75005, France}
\author{Robert B. Mann}
\affiliation{Department of Physics and Astronomy, University of Waterloo, Waterloo, Ontario, Canada N2L 3G1}
\affiliation{Institute for Quantum Computing, University of Waterloo, Waterloo, Ontario, Canada N2L 3G1}
\affiliation{Perimeter Institute for Theoretical Physics, 31 Caroline St N, Waterloo, Ontario, Canada N2L 2Y5}
\author{John Martin}
\affiliation{Institut de Physique Nucl\'{e}aire, Atomique et de Spectroscopie, CESAM, University of Li\`{e}ge, B-4000 Li\`{e}ge, Belgium}

\maketitle

\begin{abstract}
Quasiprobability has become an increasingly popular notion for characterising non-classicality in quantum information, thermodynamics, and metrology.  Two important distributions with non-positive quasiprobability are the Wigner function and the Glauber-Sudarshan function.  Here we study properties of the spin Wigner function for finite-dimensional quantum systems and draw comparisons with its infinite-dimensional analog, focusing in particular on the relation to the Glauber-Sudarshan function and the existence of absolutely Wigner-bounded states.  More precisely, we investigate unitary orbits of mixed spin states that are characterized by Wigner functions lower-bounded by a specified value. To this end, we extend a characterization of the set of absolutely Wigner positive states as a set of linear eigenvalue constraints, which together define a polytope centred on the maximally mixed state in the simplex of spin-$j$ states. The lower bound determines the relative size of such absolutely Wigner bounded (AWB) polytopes and we study their geometric characteristics. In each dimension a Hilbert-Schmidt ball representing a tight purity-based sufficient condition to be AWB is exactly determined, while another ball representing a necessary condition to be AWB is conjectured. Special attention is given to the case where the polytope separates orbits containing only positive Wigner functions from other orbits because of the use of Wigner negativity as a witness of non-classicality. Comparisons are made to absolute symmetric state separability and spin Glauber-Sudarshan positivity, with additional details given for low spin quantum numbers.
\end{abstract}

\section{Introduction}

When studying many-body mixed states,  the question naturally arises as to the maximal amount of entanglement that can be generated under arbitrary unitary evolution.  As state purity decreases due to interaction with the environment, such maximal entanglement is expected to decrease as well, with there being some point beyond which unitary evolution alone will not suffice to create entanglement. This has led to the concept of absolute separability, a defining property of certain mixed states whose isolated evolution cannot yield any entanglement.  In the case of $N$ qubits in a symmetric state, or equivalently a single system with spin $j=N/2$, recent work, including one based on the results presented here, has been devoted to characterizing the non-trivial set of symmetric absolutely separable (SAS) states~\cite{Champagne2022, SerranoEnsstiga2023, 2024Serrano}. In this context, SAS state balls centred on the maximally mixed state have been found. Inspired by this research on entanglement, we study here another measure of non-classicality given by the presence of negative values of the spin Wigner function for a finite-dimensional system.

Wigner negativity has long been an indicator of non-classicality in quantum systems. It is a necessary feature to observe a Bell-type violation with phase-space observables, see, e.g., \cite{Acn2009} and references therein. For pure states, Hudson’s theorem identifies Wigner negativity as both a necessary and sufficient condition for non-Gaussianity. In the setting of quantum information science, this connection is further reinforced by its pivotal role in enabling quantum advantage, particularly within the magic state injection model of universal fault-tolerant quantum computation~\cite{Veitch_Ferrie_Gross_Emerson_2012, mari_eisert_simulation_2012, Howard_Wallman_Veitch_Emerson_2014, PashayanEstimating2015, Delfosse_qudits_2017, Schmid_Unique_2022, booth_cv_connection_2022}. The associated resource theories \cite{Veitch_resource_2014, Albarelli_cv_resource_2018, Takagi_resource_2018, Wang_magic_channels_2019} further demonstrate that more negativity volume may be associated with more non-classicality. Parallel to these studies, in the field of many-body quantum dynamics of large interacting systems, the spin Wigner function \cite{stratonovich_distributions_1956} was recently used to accurately model the dynamics of spin systems~\cite{2022Fleischhauer,2023Fleischhauer}. Under certain dynamics, dephasing for example, the evolution of a spin state with an initially everywhere positive Wigner function can be simulated efficiently using stochastic trajectories.  This connection to the spin Wigner function, together with the growing prominence of Wigner negativity as a whole, shows that a classification of general mixed states in relation to their Wigner negativity is needed.

In this article, we completely characterize the degree of Wigner negativity that can be obtained by a set of states sharing a given spectrum.  Analogous to SAS states, we call a spin-$j$ state \textit{Absolutely Wigner-Positive} (AWP) if its spin Wigner function remains positive everywhere under the action of all unitaries $U\in\text{SU}(2j+1)$. From the point of view of a spin-$j$ system composed of $N=2j$ spin-$\frac{1}{2}$ objects (i.e., qubits), these unitary transformations are those that correspond to the most general unitary evolution of the $N$ spins in the symmetric subspace, i.e., those that connect any two states with the same spectrum in the symmetric subspace.

A natural consequence of this effort is a deeper understanding of whether central results from the original phase-space picture for continuous variable systems on the connection between Glauber-Sudarshan positivity and Wigner positivity still hold in the SU(2)-based phase-space picture of finite-dimensional systems.  Despite the strong formal relationship between the planar (infinite-dimensional) and spherical (finite-dimensional) descriptions \cite{brif_mann_lie_1999} relatively less is known about the latter.  Investigating such differences hence forms a second objective of this work.

In order to position our work in a wider context, we begin with a brief note on related research. Recent investigations have studied AWP states for a wide class of Wigner functions in general $d$-dimensional quantum systems \cite{Abbasli2020, Abgaryan2021, Abgaryan2020, Abgaryan2021b}.  In particular, the set of AWP states with respect to a given Wigner function was found to form a polytope within the simplex of state spectra.  It was further shown that this class of functions always contains the particular one introduced by Stratonovich \cite{stratonovich_distributions_1956}, rediscovered by Agarwal \cite{1981Agarwal, dowling_agarwal_1994}, and now used in many experimental set-ups \cite{Riedel_Nature_2010, Schmied_Treutlein_NJP_2011, McConnell_3000atoms_Nature_2015, Chen_spin_in_diamond_2019}; see \cite{Klimov_Romero_de_Guise_2017} for a recent review.  By contrast, here we focus exclusively on this canonical Wigner function for spin-$j$ systems, which is the only SU(2)-covariant Wigner function compatible with other well-known quasiprobability distributions in the following senses: (i) it can be continuously transformed into the Husimi function or into the Glauber-Sudarshan function~\cite{varilly_moyal_1989}, and (ii) the original Wigner function defined for infinite dimensional quantum systems is retrieved by taking the infinite spin limit $j\rightarrow\infty$ \cite{Weigert_contracting_2000}.  In addition to offering a related but alternative derivation of the polytopes first discovered in \cite{Abbasli2020, Abgaryan2021, Abgaryan2020, Abgaryan2021b}, here we go beyond previous investigations in three ways. The first is that we extend the argument to include
unitary orbits of Wigner functions lower-bounded by values that are not necessarily zero.
These one-parameter families of polytopes, which we refer to as \textit{absolutely Wigner bounded} (AWB) polytopes, are of interest not only for Wigner functions but also for other quasiprobability distributions. The second is that we go into explicit detail on the geometric properties of these polytopes in all dimensions, including an analysis on their infinite-spin limit\ignore{, and explore their relevance in the context of simulating quantum spin systems with the help of classical methods}. The third is that we compare the AWP polytopes with the set of SAS states, which amounts to a comparison between Wigner negativity and entanglement in the mixed state setting.

Our first result is the complete characterization of the set of AWB spin states in all finite dimensions, with AWP states appearing as a very special case. In particular, the set of AWB states forms a polytope in the simplex of density matrix spectra. The polytopes are delimited
by $(2j+1)!$ hyperplanes defined by permutations of the eigenvalues of the kernel operator defining the Wigner function. Centred on the maximally mixed state for each dimension, we also exactly find the largest possible ball containing nothing but AWB states, which amounts
to the tightest sufficient condition to be AWB based solely on the purity of mixed states. We also obtain an expression that we conjecture to describe the smallest ball containing all AWB states, which amounts to the tightest necessary condition based solely on the purity of mixed states. Numerical evidence strongly supports this conjecture. For both criteria, we discuss their geometric
interpretation in relation to the full AWB polytope. We then specialize to the set of AWP states and compare them with the set of SAS states \cite{Giraud_classicality_2008,Bohnet-Waldraff-PPT_2016,Bohnet-Waldraff2017absolutely}.
We emphasize that, as there is a continuous transformation relating the continuous Wigner function and the other phase space functions, the results obtained here are completely transferable to the Husimi and Glauber-Sudarshan functions~\cite{2024Serrano}.

We then use these polytopes to draw comparisons between the phase space description of finite-dimensional quantum systems with that of infinite-dimensional ones.  In particular, we proved the existence of states that are each Glauber-Sudarshan-positive (even absolutely positive) yet Wigner-negative.  This is in stark contrast to the bosonic setting where a positive Glauber-Sudarshan function trivially implies a positive Wigner function due to their well-known relationship though Gaussian convolution \cite{Cahill_Glauber_quasi_1969, Lee_nonclassical_depth_1991}.  Conversely, we show that there exists a non-trivial set of states that are each Wigner-positive (even AWP) yet are Glauber-Sudarshan-negative.

Finally we analyze the infinite spin limit of these polytopes in the context of the well-known spin-to-boson contraction \cite{Arecchi_1972} and its manifestation on the level of Wigner functions \cite{Weigert_contracting_2000, koczor_parity_2020}.  In particular, the volumes of the AWP balls vanish in the limit $j\to\infty$, which offers strong evidence that the notion of AWP states cannot exist in the infinite-dimensional setting. Furthermore, the outer AWB ball for non-zero cutoff is not found to vanish, implying that bosonic AWB states may indeed exist.

Our paper is organized as follows. Section \ref{sec:background} briefly outlines the generalized phase space picture using the parity-operator/Stratonovich framework for the group SU(2).  Section \ref{sec:AWPPolytope} proves general results on AWB polytopes, valid for any spin quantum number. More precisely, Secs.\ \ref{Sec:AWBpolytopes}--\ref{Sec:Majorization} derive and characterize AWB polytopes, Sec.\ \ref{sec:AWP_balls} determines and conjectures, respectively, the largest and smallest Hilbert-Schmidt ball sitting inside and outside the AWB polytopes, and Sec.\ \ref{sec:infinite-spin-limit} studies the infinite spin limit. Section \ref{sec:entanglement} explores low-dimensional cases in more detail and draws comparisons to entanglement.  Finally, conclusions are drawn and perspectives are outlined in Sec.~\ref{sec:conclusion}. The manuscript ends with appendices containing technical developments.

\section{Background}
\label{sec:background}

The parity-operator framework is the generalization of Moyal quantum mechanics to physical systems other than a collection of non-relativistic spinless particles; see \cite{2021Everitt} for a recent information-theoretic review.  Each type of system has a different phase space, and the various types are classified by the system's dynamical symmetry group \cite{brif_mann_lie_1999}.  In each case the central object is a map, $\Delta$, called the \textit{kernel}, which takes points in phase space to operators on Hilbert space.  A quasi-probability representation of a quantum state, evaluated at a point in phase space, is the expectation value of the phase-point operator assigned to that point.  Different kernels yield different distributions but all must obey the Stratonovich-Weyl axioms, which ensure, among other properties, the existence of an inverse map and that the Moyal picture is as close as possible to classical statistical physics over the same phase space (i.e., the Born rule as an $L^2$ inner product).

When applied to the Heisenberg-Weyl group (i.e., the group of displacement operators generated by the canonical commutation relations, $[ x,p ] = i \mathbb{1}$) this framework reduces to the common phase space associated with $n$ canonical degrees of freedom, $\mathbb{R}^{2n}$, and the phase-point operators corresponding to the Wigner function appear as a set of displaced parity operators \cite{brif_mann_lie_1999, Grossmann_1976, royer_parity_1977}.  A spin-$j$ system on the other hand corresponds to the group SU(2), which yields a spherical phase space, $S^2$.  Here we list some necessary results from this case; see the recent review \cite{Klimov_Romero_de_Guise_2017} and references therein for more information.

\subsection{Wigner function of a spin state}
Consider a single spin system with spin quantum number $j$.  Pure states live in the Hilbert space $\mathcal{H}\simeq \mathbb{C}^{2j+1}$, which carries an irreducible unitary representation of SU(2), $U_g$ for $g\in \text{SU}(2)$.
Mixed states live in the space of operators, $\mathcal{L(H)}$, where SU(2) acts via conjugation: $U_g \rho U^\dagger_g$.  This action on operator space is not irreducible and may be conveniently decomposed into irreducible multipoles.

The SU(2) Wigner kernel of a spin-$j$ system is
\begin{equation}\label{Wignerkernel}
\begin{aligned}
    &\Delta: S^2\rightarrow \mathcal{L(H)} \\
    &\Delta(\Omega)= \sqrt{\frac{4\pi}{2j+1}}\sum_{L=0}^{2j}\sum_{M=-L}^{L}Y_{LM}^{*}(\Omega)T_{LM},
\end{aligned}
\end{equation}
where $\Omega = (\theta,\phi) \in S^2$, $Y_{LM}(\Omega)$ are the spherical harmonics, and $T_{LM} \equiv T_{LM}^{(j)}$ are the spherical tensor operators associated with spin $j$ \cite{Varshalovich_1988}.  To avoid cluttered notation we do not label the operator $\Delta$ with a $j$; the surrounding context should be clear on which dimension/spin is being discussed.  The Wigner function of a spin state $\rho$ is defined as
\begin{equation}
    \begin{aligned}
    W_\rho(\Omega) & = \mathrm{Tr}\left[\rho\Delta(\Omega)\right]\\
     &= \frac{1}{2j+1} + \sqrt{\frac{4\pi}{2j+1}}\sum_{L=1}^{2j}\sum_{M=-L}^{L}\rho_{LM}Y_{LM}(\Omega),
     \label{eq:defWignerFunction}
    \end{aligned}
\end{equation}
where $\rho_{LM} = \tr[\rho \, T^\dagger_{LM}]$ are state multipoles~\cite{1981Agarwal}.  This function is normalized according to
\begin{equation}\label{eq:normalization}
    \frac{2j+1}{4\pi} \int_{S^2} W_\rho(\Omega) \, d\Omega = 1,
\end{equation}
and, as Eq.\ \eqref{eq:defWignerFunction} suggests, the maximally mixed state (MMS) $\rho_0 = \mathbb{1}/(2j+1)$ is mapped to the constant function 
\begin{equation}
    W_{\rho_0}(\Omega) = \frac{1}{2j+1}.
\end{equation}
Note that, using the Condon-Shortley phase convention, the Wigner function is real-valued for Hermitian operators, and in particular for quantum states.

An important property is SU(2) covariance:
\begin{equation}\label{eq:covariance}
    W_{U_g \rho U^\dagger_g}(\Omega) = W_{\rho}(g^{-1}\, \Omega),
\end{equation}
where the right hand side denotes the spatial action of SU(2) on the sphere.  As this is simply a rigid rotation, analogous to an optical displacement operator rigidly translating $\mathbb{R}^{2n}$, the overall \ignore{functional form} shape of any Wigner function is unaffected (i.e., the graph of the function is fixed up to rotation).  Hence the Wigner negative volume is defined as~\cite{2021Everitt,davis2021wigner}
\begin{equation}
\label{eq:WignerNeg}
    \delta(\rho)=\frac{1}{2}\left(\int_{\Gamma}\left|W_{\rho}(\Omega)\right|d\mu(\Omega)-1\right),
\end{equation}
often used as a quantification of Wigner negativity and a measure of non-classicality, is invariant under SU(2) transformations. Note that the action of a general unitary $U \in $ SU$(2j+1)$ on a state $\rho$ can of course radically change its Wigner function and thus also its negative volume. The quantity $d\mu(\Omega) = (2j+1)/(4\pi) \sin\theta d\theta d\phi$ is the invariant measure on the phase space.

A related consequence of SU(2) covariance is that all phase-point operators have the same spectrum \cite{heiss-weigert-discrete-2000}.  The set of kernel eigenvectors at the point $\Omega$ is the Dicke basis quantized along the axis $\mathbf{n}$ pointing to $\Omega$, such that we have
\begin{equation}\label{eq:SU(2)-kernel-diagonal}
    \Delta (\Omega) = \sum_{m=-j}^j \Delta_{j,m} \ketbra{j,m;\mathbf{n}}{j,m;\mathbf{n}},
\end{equation}
with rotationally-invariant eigenvalues
\begin{equation}\label{eq:kernel_eigenvalues}
    \Delta_{j,m} = \sum_{L=0}^{2j} \frac{2L+1}{2j+1} C^{j,m}_{j,m; L, 0}
\end{equation}
where $C_{j_1, m_1;j_2, m_2}^{J,M}$ are Clebsch-Gordan coefficients. In particular, at the North pole ($\Omega=0$) the kernel is diagonal in the standard Dicke basis and its matrix elements are
\begin{equation}
    [\Delta(0)]_{mn} = \langle j,m | \Delta(0) | j,n \rangle = \Delta_{j,m}\delta_{mn}.
\end{equation}
The kernel is guaranteed to have unit trace at all points and in all dimensions:
\begin{equation}\label{eq:kernel_eigs_unit_sum}
    \sum_{m=-j}^j \Delta_{j,m} = 1 \quad \forall\, j,
\end{equation}
and satisfies the relationship~\cite{Abgaryan2021}
\begin{equation}\label{identity2}
    \sum_{m=-j}^j \Delta_{j,m}^2 = 2j+1 \quad \forall\, j,
\end{equation}
for which we give a proof of in Appendix \ref{sec:remarkablerelation} for the sake of consistency.

Finally, we note the following observations on the set of kernel eigenvalues \eqref{eq:kernel_eigenvalues}:
\begin{equation}\label{eq:kernel_eigenvalue_assumption}
\begin{split}
    & |\Delta_{j,m}| > |\Delta_{j,m-1}| \neq 0, \\[2pt]
    & \sgn(\Delta_{j,k}) = (-1)^{j-k}
\end{split}
\end{equation}
for all $m \in \{-j+1,...,j \}$. That is, as $m$ ranges from $j$ to $-j$ the eigenvalues alternate in sign (starting from a positive value at $m=j$) and strictly decrease in absolute value without vanishing.  Numerics support this assumption though we are not aware of any proof;  see also \cite{davis2021wigner,koczor_parity_2020} for discussions on this point.  Note this implies that the kernel has multiplicity-free eigenvalues for all finite spin.  This is in contrast to the Wigner function on $\mathbb{R}^2$, which has a highly degenerate kernel (i.e., it acts on an infinite-dimensional Hilbert space but only has two eigenvalues) \cite{royer_parity_1977}.  Only some of our results depend on \eqref{eq:kernel_eigenvalue_assumption}, and we will highlight when this is the case.

In what follows we use the vector notation $\boldsymbol{\lambda}$ for the spectrum $(\lambda_0,\lambda_1,\ldots,\lambda_{2j})$ of a density operator $\rho$, and likewise $\boldsymbol{\Delta}$ for the spectrum $(\Delta_{j,-j}, \Delta_{j,-j+1},...,\Delta_{j,j})$ of the kernel $\Delta$.  We also alternate between the double-subscript notation $\Delta_{j,m}$, which refers directly to Eq.~\eqref{eq:kernel_eigenvalues}, and the single-subscript notation $\Delta_i$ where $i\in\{0,...,2j\}$, which denotes a vector component, similar to $\lambda_i$.

\section{Polytopes of absolutely Wigner bounded states}
\label{sec:AWPPolytope}

We present in this section our first result. We prove there exists a polytope containing all absolutely Wigner bounded (AWB) states with respect to a given lower bound, and fully characterize it.  When this bound is zero we refer to such states as absolutely Wigner positive (AWP).  We also determine a necessary and sufficient condition for a state to be inside the AWB polytope based on a majorization criterion. These results offer a strong characterization of the classicality of mixed spin states.

We start with the following definition of AWB states:
\begin{definition}
A spin-$j$ state $\rho$ is absolutely Wigner bounded (AWB) with respect to $W_\mathrm{min}$ if the Wigner function of each state unitarily connected to $\rho$ is lower bounded by $W_\mathrm{min}$.  That is, if
\begin{equation}
\begin{split}
    W_{U\rho U^\dagger}(\Omega) \geq W_\mathrm{min}
\end{split}
\qquad
\begin{split}
    &\forall \,\, \Omega \in S^2 \\
    &\forall \,\, U \in \mathrm{SU}(2j+1).
\end{split}
\end{equation}
When $W_\mathrm{min} = 0$ we refer to such states as absolutely Wigner positive (AWP).  Hence, an AWP state has only non-negative Wigner function states in its unitary orbit.
\end{definition}

\subsection{Full set of AWB states}
\label{Sec:AWBpolytopes}

The following proposition is an extension and alternative derivation of a result on absolute positivity obtained in~\cite{Abgaryan2020,Abgaryan2021}. It gives a complete characterization of the set of states whose unitary orbit contains only states whose Wigner function is larger than a specified constant value, and is valid for any spin quantum number $j$.

\begin{proposition}
\label{AWP_Theorem}
Let $\boldsymbol{\Delta}^\uparrow$ denote the vector of kernel eigenvalues sorted into increasing order, and let
\begin{equation}
    W_\mathrm{min}\in [ \Delta^\uparrow_0, \tfrac{1}{2j+1} ].
\end{equation}
Then a spin state $\rho$ has in its unitary orbit only states whose Wigner function satisfies $W(\Omega)\geq W_\mathrm{min}\;\forall \,\Omega$ iff its decreasingly ordered eigenvalues $\boldsymbol{\lambda}^\downarrow$ satisfy the following inequality
\begin{equation}\label{eq:THM}
\sum_{i=0}^{2j}\lambda_{i}^\downarrow\Delta_i^\uparrow\geq W_\mathrm{min}.
\end{equation}
\end{proposition}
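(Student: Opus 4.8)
The plan is to recognize that the extremal Wigner value over the entire unitary orbit is a purely spectral optimization. Starting from $W_{U\rho U^\dagger}(\Omega) = \tr[U\rho U^\dagger \Delta(\Omega)] = \tr[\rho\, U^\dagger \Delta(\Omega) U]$ by cyclicity of the trace, I would note that the stated condition is equivalent to the single scalar inequality
\[
\min_{\Omega,\, U}\, W_{U\rho U^\dagger}(\Omega) \ge W_\mathrm{min}.
\]
The crucial reduction is that, because every phase-point operator $\Delta(\Omega)$ is Hermitian with the fixed, $\Omega$-independent spectrum $\boldsymbol{\Delta}$ (Eqs.~\eqref{eq:SU(2)-kernel-diagonal}--\eqref{eq:kernel_eigenvalues}), the family $\{U^\dagger \Delta(\Omega) U\}$ as $(\Omega,U)$ ranges over $S^2\times\mathrm{SU}(2j+1)$ is exactly the full unitary orbit of $\Delta$, i.e.\ the set of all Hermitian operators with spectrum $\boldsymbol{\Delta}$. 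Indeed, fixing $\Omega=0$ and varying $U$ already sweeps out this whole orbit, so the $\Omega$ variation is redundant. Hence the problem collapses to
\[
\min_{\Omega,\, U}\, W_{U\rho U^\dagger}(\Omega) = \min_{V=V^\dagger,\ \mathrm{spec}(V)=\boldsymbol{\Delta}} \tr[\rho V].
\]

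Next I would evaluate this minimum using the Von Neumann trace inequality, equivalently the rearrangement inequality applied in a common eigenbasis. For two Hermitian operators with fixed spectra, the trace inner product $\tr[\rho V]$ is minimized when their eigenvectors coincide but the eigenvalues are oppositely ordered, yielding
\[
\min_{\mathrm{spec}(V)=\boldsymbol{\Delta}} \tr[\rho V] = \sum_{i=0}^{2j}\lambda_i^\downarrow \Delta_i^\uparrow,
\]
where $\boldsymbol{\lambda}^\downarrow$ is the decreasingly ordered spectrum of $\rho$ and $\boldsymbol{\Delta}^\uparrow$ the increasingly ordered kernel spectrum. The minimum is attained because the orbit is compact, and the anti-aligned pairing is realized by a genuine conjugation, so the bound is sharp. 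Combining the two displays gives the claimed equivalence: the orbit stays above $W_\mathrm{min}$ precisely when $\sum_i \lambda_i^\downarrow \Delta_i^\uparrow \ge W_\mathrm{min}$.

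For completeness I would record why the hypothesis $W_\mathrm{min}\in[\Delta_0^\uparrow,\tfrac{1}{2j+1}]$ is the meaningful range. A pure state ($\lambda_0^\downarrow=1$, rest zero) gives $\sum_i \lambda_i^\downarrow \Delta_i^\uparrow = \Delta_0^\uparrow$, so for $W_\mathrm{min}\le\Delta_0^\uparrow$ the constraint is vacuous and every state qualifies; meanwhile Chebyshev's sum inequality together with the unit traces $\sum_i\Delta_i=1$ (Eq.~\eqref{eq:kernel_eigs_unit_sum}) and $\sum_i\lambda_i=1$ shows $\sum_i\lambda_i^\downarrow\Delta_i^\uparrow \le \tfrac{1}{2j+1}$ for every state, with equality for the maximally mixed state, so for $W_\mathrm{min}>\tfrac{1}{2j+1}$ no state qualifies.

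The step I expect to require the most care is the reduction to the kernel orbit: specifically the claim that sweeping $U$ over $\mathrm{SU}(2j+1)$ rather than the full unitary group still produces every Hermitian operator with spectrum $\boldsymbol{\Delta}$ (which holds because the global phase drops out of the conjugation action), and the verification that the oppositely ordered eigenvalue pairing is genuinely achievable by conjugation, so that the Von Neumann lower bound is saturated and the intermediate inequality is in fact an equality.
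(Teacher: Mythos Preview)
Your proof is correct and follows the same overall strategy as the paper---reduce the double minimization over $(\Omega,U)$ to a spectral optimization over the unitary orbit of the kernel, then identify the minimum as the anti-sorted inner product---but the technical tool you invoke is different. You appeal directly to the von Neumann trace inequality (equivalently the rearrangement inequality) to conclude $\min_{\mathrm{spec}(V)=\boldsymbol{\Delta}}\tr[\rho V]=\boldsymbol{\lambda}^\downarrow\!\cdot\!\boldsymbol{\Delta}^\uparrow$, whereas the paper instead expands $\tr[V\Lambda V^\dagger\Delta(0)]$ in components, recognizes the matrix $|V_{pq}|^2$ as unistochastic, and applies the Birkhoff--von Neumann theorem to write it as a convex combination of permutations. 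Your route is shorter and cleaner; the paper's route is more self-contained (it essentially reproves the trace inequality in situ) and has the side benefit of making the permutation structure explicit, which the authors then reuse when describing the polytope faces via the $(2j+1)!$ hyperplanes $\sum_i\lambda_i\Delta_{\pi(i)}=W_{\mathrm{min}}$. Your remark on the meaningful range of $W_{\mathrm{min}}$ via Chebyshev is a nice addition not spelled out in the paper.
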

\textit{Remark.}  While not necessary for the proof to hold, note that according to Eq.~\eqref{eq:kernel_eigenvalue_assumption} the sorted kernel eigenspectrum becomes $\boldsymbol{\Delta}^\uparrow=(\Delta_{j,j-1}, \Delta_{j,j-3},...,\Delta_{j,-j},...,\Delta_{j,j-2},\Delta_{j,j})$ and so $W_\mathrm{min} \in [\Delta_{j,j-1}, \frac{1}{2j+1}]$.  The upper bound comes from Eq.\ \eqref{eq:normalization}, which implies that any Wigner function with $W_\mathrm{min} > 1/(2j+1)$ would not be normalized.  Furthermore, for $W_\mathrm{min} = 0$, this proposition provides a characterisation of the set of AWP states, as previously found in a more abstract and general setting in~\cite{Abgaryan2020, Abgaryan2021}.

\begin{proof}
Consider a general spin state $\rho$. We are first looking for a necessary condition for any element $U\rho U^{\dagger}$ of the unitary orbit of $\rho$ to have a Wigner function $W(\Omega)\geq W_\mathrm{min}$ at any point $\Omega\in S^2$. Since the unitary transformation applied to $\rho$ may correspond, in a particular case, to an SU(2) rotation, the value of the Wigner function of $\rho$ at any point $\Omega$ corresponds to the value of the Wigner function at $\Omega=0$ of an element in its unitary orbit (the rotated version of $\rho$). But since we are considering the full unitary orbit, i.e., all possible $U$'s, we can set the Wigner function argument to $\Omega=0$ via the following reasoning. The state $\rho$ can always be diagonalized by a unitary matrix $M$, i.e., $M\rho M^\dagger=\Lambda$ with $\Lambda=\mathrm{diag}(\lambda_0,...,\lambda_{2j})$ a diagonal positive semi-definite matrix. The Wigner function at $\Omega=0$ of $U\rho U^{\dagger}$ is then given by
\begin{equation*}
\begin{aligned}
W_{U\rho U^{\dagger}}(0) &{}={} \mathrm{Tr}\left[U\rho U^{\dagger}\Delta(0)\right] \\
      &{}={} \mathrm{Tr}\left[U M^\dagger \Lambda M U^{\dagger}\Delta(0)\right].
\end{aligned}
\end{equation*}
By defining the unitary matrix $V=U M^\dagger$ and calculating the trace in the Dicke basis, we obtain (where we drop the Wigner function argument in the following)
\begin{equation*}
\begin{aligned}
W_{U\rho U^{\dagger}} &{}= \mathrm{Tr}\left[V \Lambda V^{\dagger}\Delta(0)\right]\\
 &{}= \sum_{p,q,k,l=0}^{2j}V_{pq}\lambda_{q}\delta_{qk}V_{lk}^{*}\Delta_{l}\delta_{lp}\\
 &{}=\sum_{q,p=0}^{2j}\lambda_{q}\left|V_{qp}\right|^{2}\Delta_{p}.
\end{aligned}
\end{equation*}
The positive numbers $|V_{qp}|^{2}$ in the previous equation define the entries of a unistochastic (hence also doubly stochastic) matrix of dimension $(2j+1)\times(2j+1)$ which we denote by $X$,
\begin{equation}
    X_{qp}=\left|V_{qp}\right|^{2}.
\end{equation}
By the Birkhoff-von Neumann theorem, we know that $X$ can be expressed as a convex combination of permutation matrices $P_{k}$,
\begin{equation}
    X=\sum_{k=1}^{N_p}c_{k}P_{k},
\end{equation}
where $N_p=(2j+1)!$ is the total number of permutations $\pi_{k} \in S_{2j+1}$ with $S_{2j+1}$ the symmetric group over $2j+1$ symbols, 
\begin{equation}
    c_{k}\geq0 \quad\forall\, k\quad \mathrm{and} \quad \sum_{k=1}^{N_p}c_{k}=1.
\end{equation}
Consequently, we have
\begin{equation*}
\begin{aligned}
    W_{U\rho U^{\dagger}} &{}= \sum_{p,q=0}^{2j}\lambda_{p}X_{pq}\Delta_{q}\\
 &{}= \sum_{k=1}^{N_p}c_{k}\sum_{p,q=0}^{2j}\lambda_{p}\left[P_{k}\right]_{pq}\Delta_{q}\\
 &{}= \sum_{k=1}^{N_p}c_{k}\sum_{p=0}^{2j}\lambda_{p}\Delta_{\pi_{k}(p)}
\end{aligned}
\end{equation*}
For a state $\rho$ whose eigenspectrum $\boldsymbol{\lambda}$ satisfies the $N_p$ inequalities
\begin{equation}
\sum_{p=0}^{2j}\lambda_{p}\Delta_{\pi(p)}\geq W_\mathrm{min} \qquad\forall\, \pi\in S_{2j+1}\label{eq:AWPCondition}
\end{equation}
we then have
\begin{equation*}
W_{U\rho U^{\dagger}} = \sum_{k=1}^{N_p}c_{k}\sum_{p=0}^{2j}\lambda_{p}\Delta_{\pi_{k}(p)} \geq  W_\mathrm{min}
\end{equation*}
for any unitary $U$ and we conclude.

Conversely, a state has in its unitary orbits only states whose Wigner function satisfies $W(\Omega)\geq W_\mathrm{min}\;\forall \,\Omega$ if
\begin{equation}
    W_{U\rho U^{\dagger}} = \sum_{k=1}^{N_p}c_{k}\sum_{p=0}^{2j}\lambda_{p}\Delta_{\pi_{k}(p)} \geq W_\mathrm{min} \qquad \forall\, U.
\end{equation}
In particular, the unitary matrix $U$ can correspond to any permutation matrix $P$, so that we have
\begin{equation}
        W_{P \rho P^{\dagger}} = \sum_{p=0}^{2j}\lambda_{p}\Delta_{\pi(p)} \geq W_\mathrm{min} \qquad \forall \, \pi
        \label{awp_ineq}
\end{equation}
and we conclude that the state satisfies \eqref{eq:AWPCondition}.

In fact, it is enough to consider the ordered eigenvalues $\boldsymbol{\lambda}^{\downarrow}$ so that a state is AWB iff it verifies the most stringent inequality \eqref{awp_ineq} amongst all possible permutations
\begin{equation}\label{eq:ordered_awp_ineq}
    \boldsymbol{\lambda}^\downarrow \boldsymbol{\cdot} \boldsymbol{\Delta}^\uparrow = \sum_{p=0}^{2j}\lambda_{p}^\downarrow \Delta_{p}^\uparrow \geq W_\mathrm{min}
\end{equation}
with the ordered eigenvalues of the kernel $\boldsymbol{\Delta}^\uparrow$.
\end{proof}

The proof provided for Proposition 1 can in fact be reproduced for any quasiprobability distribution $\mathcal{W}$ defined on the spherical phase space $S^2$ as the expectation value of a specific kernel operator $\tilde{\Delta}(\Omega)$ in a quantum state $\rho$; that is, via $\mathcal{W}_\rho(\Omega) = \mathrm{Tr}\left[\rho \tilde{\Delta}(\Omega)\right]$, see also Refs.~\cite{Abgaryan2020,Abgaryan2021} for other generalizations. A polytope in the simplex of states will describe the absolute positivity of each quasiprobability distribution and its vertices will be determined by the eigenspectrum of the defining kernel. A family of such (normalized) distributions is obtained from the $s$-parametrized Stratonovich-Weyl kernel (see, e.g., Refs.~\cite{1981Agarwal,varilly_moyal_1989,Brif1998})
 \begin{equation}\label{sSWkernel}
\Delta^{(s)}(\Omega) = \sqrt{\frac{4\pi}{2j+1}}\sum_{L,M}\left(C_{j j, L 0}^{j j}\right)^{-s} Y_{LM}^{*}(\Omega)T_{LM}
\end{equation}
with $s\in [-1,1]$. For $s=0$, it reduces to the Wigner kernel given in Eq.~\eqref{Wignerkernel}.

As negative values of the Wigner function are generally considered to indicate non-classicality, the value $W_\mathrm{min}=0$ plays a special role. Nevertheless, since Proposition 1 holds for any $W_{\mathrm{min}}\in [ \min\{\Delta_i\},\frac{1}{2j+1}]$ the corresponding sets of states also form polytopes, which become larger as $W_{\mathrm{min}}$ becomes more negative, culminating in the entire simplex when $W_{\mathrm{min}}$ is the smallest kernel eigenvalue $\min\{\Delta_i\}$ (which according to Eq.\ \eqref{eq:kernel_eigenvalue_assumption} is $\Delta_{j.j-1}$).  There is thus a continuous transition between the one-point polytope, which represents the maximally mixed state, and the polytope containing the whole simplex.  As discussed later, Fig.\ \ref{fig:criticalpolytope} in Sec.\ \ref{sec:AWP_balls} shows a special example of this family for spin-1.

Quasiprobability distributions other than the Wigner function, such as the Husimi $Q$ function derived from the $s$-ordered Stratonovich-Weyl kernel \eqref{sSWkernel} for $s=-1$, are positive by construction, implying that the polytope for $Q_{\mathrm{min}}=0$ contains the entire simplex of state spectra. In this case it becomes especially interesting to consider lower bounds $Q_{\mathrm{min}}>0$ and study the properties of the associated polytopes.

\subsection{AWP polytopes}
\label{subsec:AWPPolytope}
Since the conditions for being AWP depend only on the eigenspectrum $\boldsymbol{\lambda}$ of a state, it is sufficient in the following to focus on diagonal states in the Dicke basis. The condition \eqref{eq:THM} for $W_{\mathrm{min}}=0$ defines a polytope of AWP states in the simplex of mixed spin states. Indeed, we start by noting that the equalities \begin{equation}
    \sum_{i=0}^{2j}\lambda_{i}\Delta_{\pi(i)} = 0
    \label{eq:AWP_hyperplanes}
\end{equation}
define, for all possible permutations $\pi$, $(2j+1)!$ hyperplanes in $\mathbb{R}^{2j}$. Together they delimit a particular polytope that contains all absolutely Wigner positive states. The AWP polytopes for $j=1$ and $j=3/2$ are respectively represented  in Figs.~\ref{fig:spin-1-simplex_2} and \ref{fig:spin-3/2-simplex} in a barycentric coordinate system (see Appendix \ref{sec:barycentricCoordinatesSystem} for a reminder).  

If we now restrict our attention to ordered eigenvalues $\boldsymbol{\lambda}^\downarrow$, we get a minimal polytope; see Fig.~\ref{fig:minimalpolytope} for the case of $j=1$. The full polytope is reconstructed by taking all possible permutations of the barycentric coordinates of the vertices of the minimal polytope.
\begin{figure}
    \centering
    \includegraphics[width=0.45\textwidth]{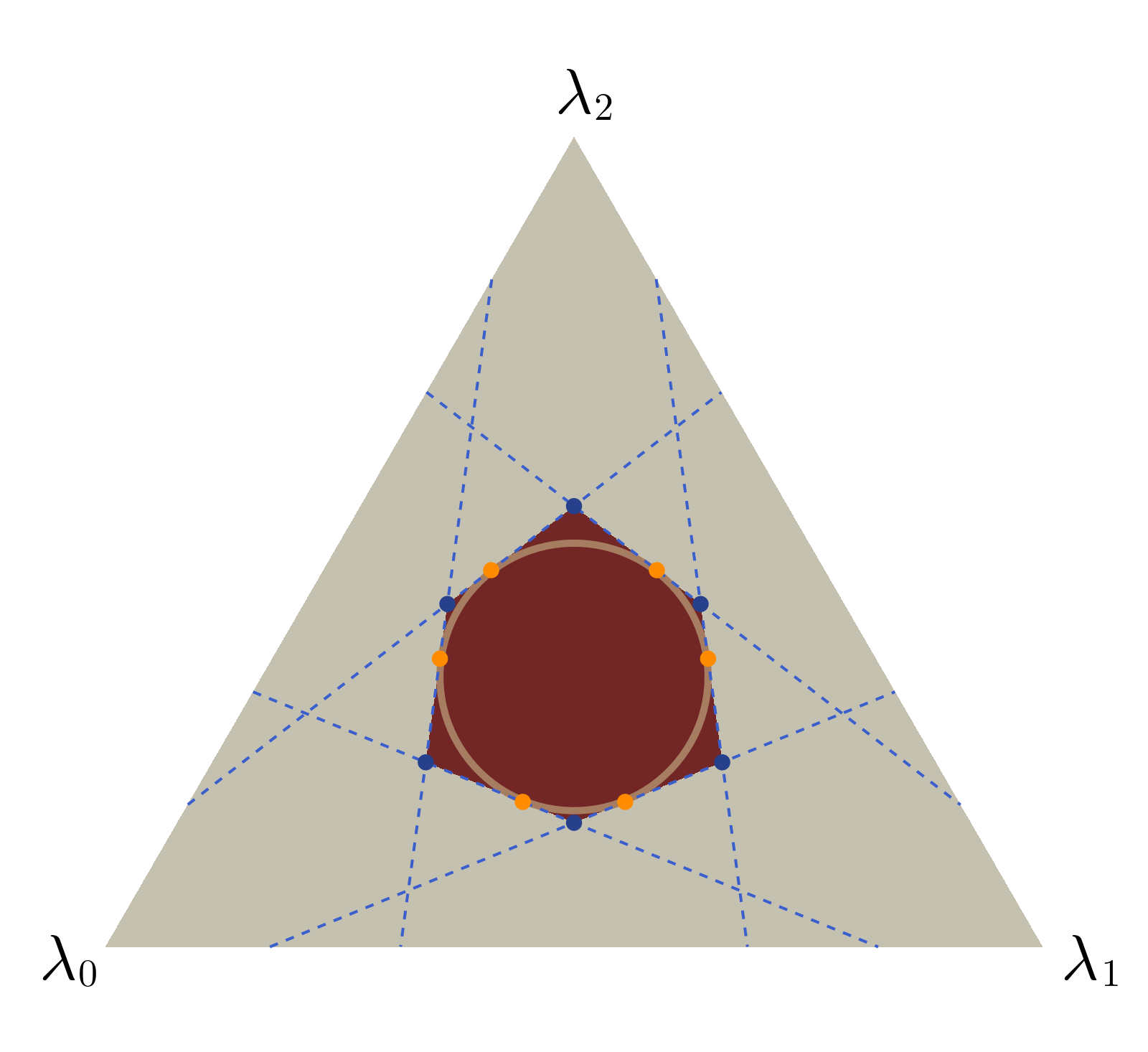}
    \caption{AWP polytope for $j=1$ displayed in the barycentric coordinate system. The AWP polytope is the area shaded in dark red with the blue dashed lines marking the hyperplanes defined by Eq.~\eqref{eq:AWP_hyperplanes}. The circle is the surface of the AWP ball (see Section \ref{sec:AWP_balls}). The orange points represent all the permutations of the spectrum \eqref{rhostar}. The gray triangle corresponds to the full simplex of spin-$1$ states with spectrum $\boldsymbol{\lambda}=(\lambda_0,\lambda_1,\lambda_2)$.}
    \label{fig:spin-1-simplex_2}
\end{figure}
These vertices can be found as follows. In general we need $2j+1$ independent conditions on the vector $(\lambda_0^{\downarrow},\lambda_1^{\downarrow},\ldots,\lambda_{2j}^{\downarrow})$ to uniquely define (the unitary orbit of) a state $\rho$.  One of them is given by the normalization condition $\sum_{i=0}^{2j}\lambda_{i}^{\downarrow} = 1$. The others correspond to the fact that a vertex of the AWP polytope is the intersection of $2j$ hyperplanes each specified by an equation of the form \eqref{eq:AWP_hyperplanes}. One of them is
\begin{equation}\label{eq:ordered_awp_eq}
\sum_{i=0}^{2j}\lambda_{i}^{\downarrow}\Delta_{i}^{\uparrow} = 0.
\end{equation}
\begin{figure}
    \centering
\includegraphics[width=0.45\textwidth]{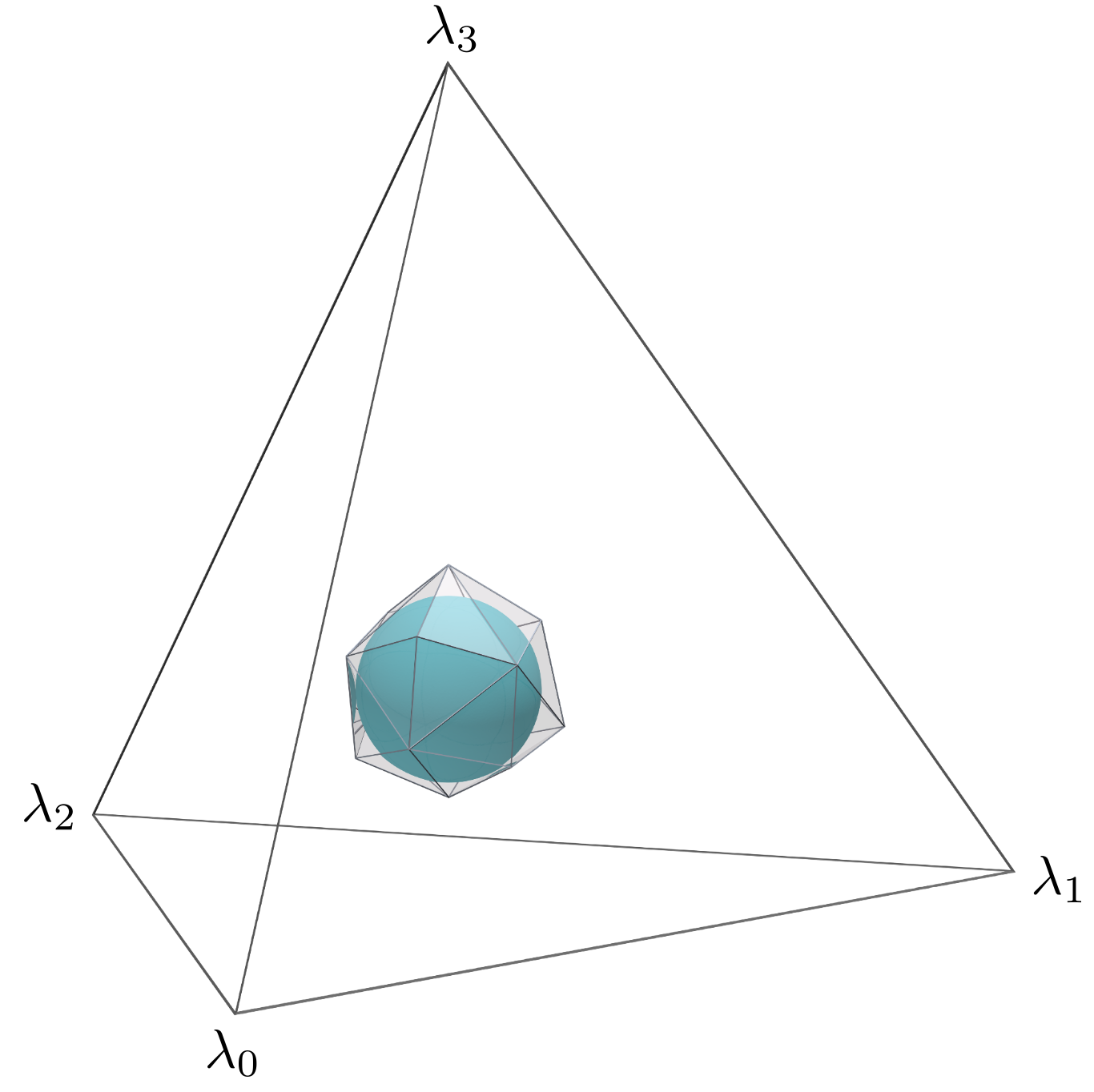}\\
\includegraphics[width=0.45\textwidth]{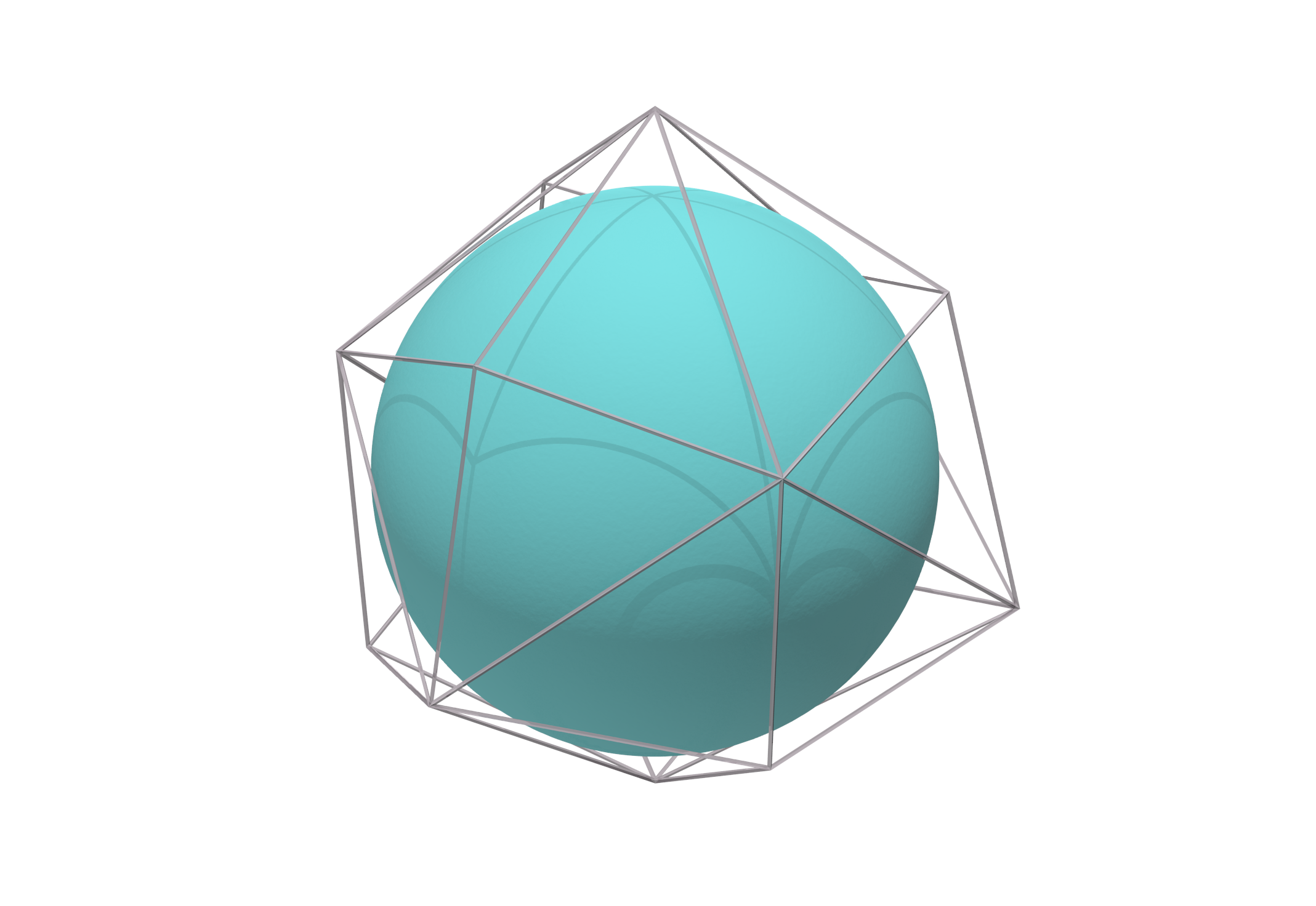}
    \caption{The AWP polytope for $j=3/2$ in the barycentric coordinate system (top). The grey rods (shown in the enlarged polytope at the bottom) are the edges of the AWP polytope and the blue sphere is its largest inner ball, with radius $r_{\mathrm{in}}^{\mathrm{AWP}}=1/(2\sqrt{15})$.}
    \label{fig:spin-3/2-simplex}
\end{figure}\begin{figure}
    \centering
    \includegraphics[width=0.45\textwidth]{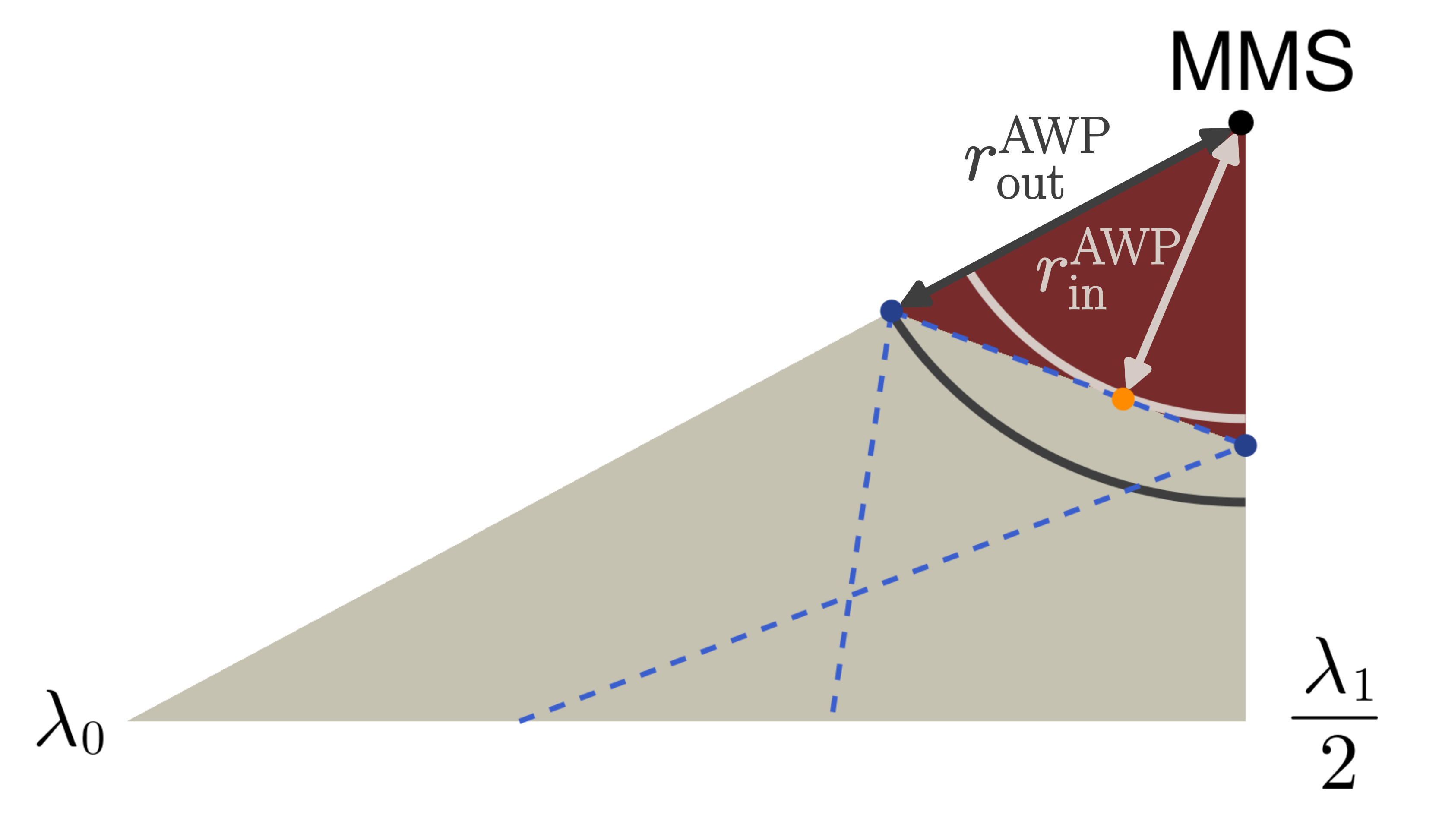}
    \caption{AWP minimal polytope for $j=1$ in the barycentric coordinate system. The structure is similar to Fig.~\ref{fig:spin-1-simplex_2} but we only draw the part where the eigenvalues of the state are ordered in descending value. The dark point corresponds to the maximally mixed state (MMS). The inner and outer AWP balls radii, $r_{\mathrm{in}}^{\mathrm{AWP}}$ and $r_{\mathrm{out}}^{\mathrm{AWP}}$, are shown.}
    \label{fig:minimalpolytope}
\end{figure}
Let us focus on the remaining $2j-1$. For simplicity, consider a transposition $\pi=(p,q)$ with $p<q$. This is the permutation whose only non-trivial action is $\pi(p) = q$ and $\pi(q)=p$. The condition \eqref{eq:AWP_hyperplanes} for this transposition becomes
\begin{align}\label{condtransposition}
& \lambda_{p}^{\downarrow}\Delta_{q}^{\uparrow}+\lambda_{q}^{\downarrow}\Delta_{p}^{\uparrow}+
        \sum_{\substack{i=0\\i\neq p,q}}^{2j}\lambda_{i}^{\downarrow}\Delta_{i}^{\uparrow} = 0 \nonumber\\[2pt]
 \Leftrightarrow\quad & \lambda_{p}^{\downarrow}(\Delta_{q}^{\uparrow}-\Delta_{p}^{\uparrow})+\lambda_{q}^{\downarrow}(\Delta_{p}^{\uparrow}-\Delta_{q}^{\uparrow})=0,
\end{align}
where the second line comes from applying the constraint \eqref{eq:ordered_awp_eq}.  As all of the kernel eigenvalues are different by assumption \eqref{eq:kernel_eigenvalue_assumption}, Eq.~\eqref{condtransposition} is satisfied iff $\lambda_{p}^{\downarrow}=\lambda_{q}^{\downarrow}$.  And because the eigenvalues are ordered this also implies $\lambda_{k}^{\downarrow}=\lambda_{p}^{\downarrow}$ for all $k$ between $p$ and $q$. The only forbidden transposition is $(0,2j)$ because it would give the maximally mixed state (MMS). Hence a given transposition $(p,q)$ will correspond to a set of $q-p$ conditions $ \lambda_{l} = \lambda_{l+1}$ for $l=p,\ldots,q-1$.
Therefore, as any permutation is a composition of transpositions, the $2j-1$ conditions that follow from \eqref{eq:AWP_hyperplanes} reduce to a set of $2j-1$ nearest-neighbour eigenvalue equalities taken from
\begin{equation}\label{eq:set_of_NN_constraints}
    \mathcal{E}=\left(\lambda_{0}^\downarrow = \lambda_{1}^\downarrow, \lambda_{1}^\downarrow=\lambda_{2}^\downarrow, ... , \lambda_{2j-1}^\downarrow=\lambda_{2j}^\downarrow\right).
\end{equation}
Since we need $2j-1$ conditions, we can draw $2j-1$ equalities from $\mathcal{E}$ in order to obtain a vertex. This method gives $\binom{2j}{2j-1}=2j$ different draws and so we get $2j$ vertices for the minimal polytope.  Geometrically, Eq.\ \eqref{eq:set_of_NN_constraints} can also be seen as the set of non-trivial hyperplanes defining the minimal polytope, and the $\binom{2j}{2j-1}=2j$ draws correspond to the different intersections of the hyperplane \eqref{eq:ordered_awp_eq} with the 1-dimensional faces (i.e., edges) of the minimal polytope; see Fig.\ \ref{fig:minimalpolytope} for an example.  The full set of hyperplanes defining the minimal polytope is \eqref{eq:set_of_NN_constraints} supplemented with $\lambda_{2j}^\downarrow = 0$ and the normalization condition.

As explained previously, all other vertices of the full polytope are obtained by permuting the coordinates of the vertices of the minimal polytope. In Appendix~\ref{sec:polytope_coordinates}, we give the barycentric coordinates of the vertices of the minimal polytope up to $j=2$. The entirety of the preceding discussion of the AWP polytope vertices naturally extends to the AWB polytope vertices for which we must replace $0$ by $W_{\text{min}}$ in the right-hand side of the equality \eqref{eq:AWP_hyperplanes}.
However, for negative values of $W_{\text{min}}$, the polytope may extend beyond the simplex and some vertices will have negative-valued components, resulting in unphysical states.

A peculiar characteristic of the AWP polytope is that each point on its surface has a state in its orbit   satisfying $W(0)=0$. Indeed, for an eigenspectrum $\boldsymbol{\lambda}$ that satisfies \eqref{eq:AWP_hyperplanes} for a given permutation $\pi$, the diagonal state $\rho$ in the Dicke basis with $\rho_{ii}=\lambda_{\pi^{-1}(i)}$ satisfies
\begin{equation}
W(0)=\sum_{i=0}^{2j}\lambda_{i}\Delta_{i} = 0
\end{equation}
and is in the unitary orbit of $\boldsymbol{\lambda}$. Following the same reasoning, in the interior of the AWP polytope, there is no state with a zero-valued Wigner function.

\subsection{Majorization condition}
\label{Sec:Majorization}

Here we find a condition equivalent to \eqref{eq:THM} for a state to be AWB based on its majorization by a mixture of the vertices of the minimal polytope.

\begin{definition}
For two vectors $\mathbf{u}$ and $\mathbf{v}$ of the same length $n$, we say that $\mathbf{u}$ majorizes $\mathbf{v}$, denoted $\mathbf{u}\succ\mathbf{v}$, iff
\begin{equation}
    \sum_{k=1}^{l}u_{k}^{\downarrow} \geq \sum_{k=1}^{l}v_{k}^{\downarrow}
\end{equation}
for $l<n$, with $\sum_{k=1}^{n}u_{k}=\sum_{k=1}^{n}v_{k}$ and $\mathbf{u}^{\downarrow}$ denoting the vector $\mathbf{u}$ with components sorted in decreasing order.
\end{definition}

\begin{proposition}
\label{AWP_Majorization}
A state $\rho$ is AWB iff its eigenvalues $\boldsymbol{\lambda}$ are majorized by a convex combination of the ordered vertices $\{\boldsymbol{\lambda}^{\downarrow}_{\mathrm{v}_k}\}$ of the corresponding AWB polytope, i.e., $\exists\,\mathbf{c}\in\mathbb{R}_{+}^{2j}$ such that
\begin{equation}\label{eq:majorizationcondition}
    \boldsymbol{\lambda}\prec\sum_{k=1}^{2j}c_{k}\boldsymbol{\lambda}^{\downarrow}_{\mathrm{v}_k}
\end{equation}
with $\sum_{k=1}^{2j}c_{k}=1$.
\end{proposition}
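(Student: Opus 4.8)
The plan is to recast the AWB condition of Proposition~\ref{AWP_Theorem} as a statement about a single Schur-concave functional, and then exploit the interplay between majorization and such functionals. Define $g(\boldsymbol{\lambda}) = \boldsymbol{\lambda}^\downarrow\boldsymbol{\cdot}\boldsymbol{\Delta}^\uparrow$. By the rearrangement inequality, pairing the decreasingly ordered spectrum against the increasingly ordered kernel eigenvalues realizes the smallest value of $\sum_i \lambda_i \Delta_{\pi(i)}$ over all permutations $\pi$, so $g(\boldsymbol{\lambda}) = \min_\pi \sum_{i} \lambda_i \Delta^\uparrow_{\pi(i)}$. Written this way $g$ is a minimum of linear functions, hence concave, and it is manifestly permutation-symmetric; together these make $g$ Schur-concave, i.e.\ $\boldsymbol{u}\prec\boldsymbol{v}$ implies $g(\boldsymbol{u})\ge g(\boldsymbol{v})$. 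By Proposition~\ref{AWP_Theorem}, $\rho$ is AWB iff $g(\boldsymbol{\lambda})\ge W_{\mathrm{min}}$, and since each vertex $\boldsymbol{\lambda}^\downarrow_{\mathrm{v}_k}$ is ordered and lies on the bounding hyperplane, $g(\boldsymbol{\lambda}^\downarrow_{\mathrm{v}_k}) = \boldsymbol{\lambda}^\downarrow_{\mathrm{v}_k}\boldsymbol{\cdot}\boldsymbol{\Delta}^\uparrow = W_{\mathrm{min}}$.

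For sufficiency, suppose $\boldsymbol{\lambda}\prec\boldsymbol{\mu}$ with $\boldsymbol{\mu}=\sum_k c_k\,\boldsymbol{\lambda}^\downarrow_{\mathrm{v}_k}$. Schur-concavity gives $g(\boldsymbol{\lambda})\ge g(\boldsymbol{\mu})$, while concavity of $g$ and the vertex identity give $g(\boldsymbol{\mu})\ge\sum_k c_k\, g(\boldsymbol{\lambda}^\downarrow_{\mathrm{v}_k}) = W_{\mathrm{min}}$. Hence $g(\boldsymbol{\lambda})\ge W_{\mathrm{min}}$ and $\rho$ is AWB.

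The necessity direction is the crux, and I would handle it by a one-parameter deformation toward the apex of the majorization order. Assume $\rho$ is AWB, so $g(\boldsymbol{\lambda})\ge W_{\mathrm{min}}$ and $\boldsymbol{\lambda}^\downarrow$ lies in the minimal polytope. Let $\boldsymbol{e}_1=(1,0,\ldots,0)$ and set $\boldsymbol{\mu}(t) = (1-t)\boldsymbol{\lambda}^\downarrow + t\,\boldsymbol{e}_1$ for $t\in[0,1]$. Since both endpoints are decreasingly ordered, so is $\boldsymbol{\mu}(t)$, and a direct comparison of partial sums shows $\boldsymbol{\mu}(t)\succ\boldsymbol{\lambda}$ for every $t$. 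Now $g(\boldsymbol{\mu}(0)) = g(\boldsymbol{\lambda})\ge W_{\mathrm{min}}$ whereas $g(\boldsymbol{\mu}(1)) = g(\boldsymbol{e}_1) = \Delta^\uparrow_0 \le W_{\mathrm{min}}$, the last inequality being exactly the hypothesis $W_{\mathrm{min}}\in[\Delta^\uparrow_0,\tfrac{1}{2j+1}]$ of Proposition~\ref{AWP_Theorem}. As $g$ is continuous, the intermediate value theorem yields $t^\star$ with $g(\boldsymbol{\mu}(t^\star)) = W_{\mathrm{min}}$. The point $\boldsymbol{\mu}(t^\star)$ is then ordered, lies in the simplex, saturates the bounding hyperplane, and majorizes $\boldsymbol{\lambda}$; it therefore belongs to the face of the minimal polytope carried by that hyperplane, namely $\mathrm{conv}\{\boldsymbol{\lambda}^\downarrow_{\mathrm{v}_k}\}$, and writing $\boldsymbol{\mu}(t^\star)=\sum_k c_k\,\boldsymbol{\lambda}^\downarrow_{\mathrm{v}_k}$ finishes the argument.

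The main obstacle is the final identification in the necessity direction: one must be certain that the hyperplane face of the minimal polytope has precisely the $2j$ vertices $\boldsymbol{\lambda}^\downarrow_{\mathrm{v}_k}$, so that $\boldsymbol{\mu}(t^\star)$ is a convex combination of these and of nothing else, and that the deformation toward $\boldsymbol{e}_1$ genuinely increases majorization at every intermediate step rather than only at the endpoints. Both points rest on the geometric analysis of Sec.~\ref{subsec:AWPPolytope}: the monotone-majorization claim along the segment is verified directly from the partial-sum definition, while the vertex count is exactly the $\binom{2j}{2j-1}=2j$ enumeration established there.
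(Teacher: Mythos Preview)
Your proof is correct, but the necessity direction takes a genuinely longer route than the paper's. For sufficiency (majorization $\Rightarrow$ AWB), your Schur-concavity argument and the paper's permutahedron/Schur--Horn argument are two sides of the same coin; nothing really distinguishes them. For necessity (AWB $\Rightarrow$ majorization), however, the paper proceeds in one stroke: since $\boldsymbol{\lambda}$ lies in the full, permutation-symmetric AWB polytope, it is a convex combination of that polytope's vertices, each of which is some permutation $\pi(\boldsymbol{\lambda}^\downarrow_{\mathrm{v}_k})$; grouping the weights by orbit and using the elementary partial-sum estimate $\sum_{k,\pi} c_{k,\pi}\,\pi(\mathbf{y}_k)\prec\sum_k\bigl(\sum_\pi c_{k,\pi}\bigr)\mathbf{y}_k$ for decreasingly sorted $\mathbf{y}_k$ gives the majorization immediately. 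Your deformation toward $\mathbf{e}_1$ and IVT argument also works, but it is considerably more elaborate and, as you correctly flag, its last step requires identifying the facet $\{g=W_{\mathrm{min}}\}$ of the minimal polytope with $\mathrm{conv}\{\boldsymbol{\lambda}^\downarrow_{\mathrm{v}_k}\}$. That facet identification is consistent with Sec.~\ref{subsec:AWPPolytope} (and can be verified by checking that all non-MMS vertices $u_n$ of the ordered simplex satisfy $g(u_n)<W_{\mathrm{min}}$, so only the edges $[u_n,\mathrm{MMS}]$ are cut), but the paper's approach sidesteps the issue entirely by working with the vertex set of the full polytope rather than a particular facet of the minimal one. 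In short: your Schur-concave functional $g$ is a nice organizing device, but the paper's convex-decomposition argument for necessity is both shorter and avoids the geometric obstacle you identify.
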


\begin{proof}
If $\boldsymbol{\lambda}$ is AWB then it can be expressed as a mixture of the vertices of the AWB polytope
\begin{equation}
	\boldsymbol{\lambda} = \sum_{k}c_{k}\boldsymbol{\lambda}_{\text{v}_k}
\end{equation}
and the majorization \eqref{eq:majorizationcondition} follows.

Conversely, it is known from the Schur-Horn theorem that $\mathbf{x}\succ\mathbf{y}$ iff $\mathbf{y}$ is in the convex hull of the vectors obtained by permuting the elements of $\mathbf{x}$ (i.e., the permutahedron generated by $\mathbf{x}$).  Hence, if $\boldsymbol{\lambda}$ respects \eqref{eq:majorizationcondition}, it can be expressed as a convex combination of the vertices of the AWB polytope and is therefore inside it.
\end{proof}

\subsection{Balls of absolutely Wigner bounded states}
\label{sec:AWP_balls}

In Sec.~\ref{subsec:AWPPolytope}, we have fully characterized AWB polytopes for all finite dimensions. Taking advantage of their geometry, we present here the sufficient condition for states to belong to the AWB set, based on their purity alone. Similarly, we also conjecture a necessary condition for states to be AWB. All the mathematical developments are presented in Appendix \ref{sec:App_AWBBalls}.

\begin{proposition}
\label{AWPBalls}
Denoting by $r(\rho)$ the Hilbert-Schmidt distance between a state $\rho$ and the MMS,
\begin{equation}
\label{eq:distancetoMMS}
r(\rho) = \lVert \rho - \rho_0 \rVert_{\mathrm{HS}} =  \sqrt{\mathrm{Tr}\left[\left(\rho-\rho_0\right)^2\right]},
\end{equation}
the radius of the largest inner ball of the AWB polytope associated with a $W_{\mathrm{min}}$ value such that the ball is contained within the state simplex is
\begin{equation}\label{eq:rmaxAna}
    r_{\mathrm{in}}^{W_{\mathrm{min}}} = \frac{1-(2j+1)W_{\mathrm{min}}}{2\sqrt{j(2j+1)(j+1)}}.
\end{equation}
\end{proposition}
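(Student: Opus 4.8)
The plan is to work entirely in the $(2j+1)$-dimensional eigenvalue space, since AWB membership depends only on the spectrum $\boldsymbol{\lambda}$ and, for diagonal states, the Hilbert-Schmidt distance \eqref{eq:distancetoMMS} coincides with the Euclidean distance $\lVert\boldsymbol{\lambda}-\boldsymbol{\lambda}_0\rVert_2$, where $\boldsymbol{\lambda}_0=\tfrac{1}{2j+1}(1,\ldots,1)$ is the spectrum of the MMS. The AWB polytope lives inside the affine normalization slice $S:\sum_i\lambda_i=1$, and its facets lie on the hyperplanes $H_\pi:\boldsymbol{\lambda}\boldsymbol{\cdot}\boldsymbol{\Delta}_\pi=W_\mathrm{min}$ coming from the constraints \eqref{eq:AWPCondition}, where $\boldsymbol{\Delta}_\pi$ is the kernel-eigenvalue vector with components permuted by $\pi$. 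First I would argue that the inscribed ball is centred at the MMS: the full set of defining constraints is invariant under coordinate permutations, so the distance-to-boundary function (a minimum of affine functions, hence concave) is permutation-symmetric; since averaging any point of $S$ over the symmetric group returns $\boldsymbol{\lambda}_0$, concavity forces the maximum, i.e.\ the Chebyshev centre, to sit at the MMS. The inner radius is then the minimal distance from $\boldsymbol{\lambda}_0$ to the facet hyperplanes $H_\pi$.

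Next I would compute that distance \emph{within} the subspace $S$ rather than in the ambient space, which is the one genuinely delicate point. The in-subspace normal to $H_\pi$ is the projection of $\boldsymbol{\Delta}_\pi$ onto the orthogonal complement of $\mathbf{e}=(1,\ldots,1)$, namely $\boldsymbol{\Delta}_\pi^{\perp}=\boldsymbol{\Delta}_\pi-\tfrac{1}{2j+1}\mathbf{e}$, using $\sum_m\Delta_{j,m}=1$ from \eqref{eq:kernel_eigs_unit_sum}. Its squared norm follows from the two kernel identities,
\[
\lVert\boldsymbol{\Delta}_\pi^{\perp}\rVert^2=\lVert\boldsymbol{\Delta}_\pi\rVert^2-\frac{(\boldsymbol{\Delta}_\pi\boldsymbol{\cdot}\mathbf{e})^2}{\lVert\mathbf{e}\rVert^2}=(2j+1)-\frac{1}{2j+1}=\frac{4j(j+1)}{2j+1},
\]
where $\sum_m\Delta_{j,m}^2=2j+1$ is Eq.~\eqref{identity2}. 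The crucial feature is that this norm is independent of $\pi$, so every facet will turn out equidistant from the centre.

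Finally I would assemble the signed distance. Because $\boldsymbol{\Delta}_\pi^{\perp}\perp\mathbf{e}$ while $\boldsymbol{\lambda}_0\parallel\mathbf{e}$, one has $\boldsymbol{\Delta}_\pi^{\perp}\boldsymbol{\cdot}\boldsymbol{\lambda}_0=0$, and restricting $H_\pi$ to $S$ yields the in-subspace equation $\boldsymbol{\Delta}_\pi^{\perp}\boldsymbol{\cdot}\boldsymbol{\lambda}=W_\mathrm{min}-\tfrac{1}{2j+1}$. Hence the distance is
\[
d_\pi=\frac{\tfrac{1}{2j+1}-W_\mathrm{min}}{\lVert\boldsymbol{\Delta}_\pi^{\perp}\rVert},
\]
which is the same for all $\pi$ and nonnegative since $W_\mathrm{min}\le\tfrac{1}{2j+1}$. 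Substituting $\lVert\boldsymbol{\Delta}_\pi^{\perp}\rVert=2\sqrt{j(j+1)/(2j+1)}$ and simplifying reproduces exactly \eqref{eq:rmaxAna}.

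The main obstacle is not the algebra but two conceptual points that must be handled with care: justifying that the maximal inscribed ball is genuinely centred at the MMS (the symmetry/concavity argument above), and computing the distance inside the affine simplex slice rather than the full space, i.e.\ correctly projecting out the normalization direction $\mathbf{e}$ before normalizing. I would also record the regime of validity: for sufficiently negative $W_\mathrm{min}$ the polytope protrudes through the physical boundary $\lambda_i\ge0$, so the binding constraint on the inscribed ball becomes the simplex facets rather than the $H_\pi$; formula \eqref{eq:rmaxAna} holds precisely in the stated range of $W_\mathrm{min}$ for which the facet-distance ball still lies inside the simplex.
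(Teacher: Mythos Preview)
Your argument is correct and reaches \eqref{eq:rmaxAna} via a route that is close in spirit to the paper's but packaged differently. The paper sets up the constrained minimization \(\min_{\boldsymbol{\lambda}}\lVert\boldsymbol{\lambda}-\boldsymbol{\lambda}_0\rVert^2\) subject to \(\sum_i\lambda_i=1\) and \(\boldsymbol{\lambda}\boldsymbol{\cdot}\boldsymbol{\Delta}=W_\mathrm{min}\), solves it with two Lagrange multipliers using the same identities \eqref{eq:kernel_eigs_unit_sum} and \eqref{identity2}, and obtains the explicit tangent point \(\boldsymbol{\lambda}^*\) before evaluating its distance to the MMS. You instead go straight to the point-to-hyperplane distance formula inside the affine slice \(S\), projecting \(\boldsymbol{\Delta}_\pi\) onto \(\mathbf{e}^\perp\) and reading off the distance without ever constructing \(\boldsymbol{\lambda}^*\). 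Both compute the same perpendicular distance; yours is a little more streamlined and makes two points explicit that the paper leaves implicit---the permutation symmetry forcing the Chebyshev centre to the MMS, and the permutation-independence of \(\lVert\boldsymbol{\Delta}_\pi^\perp\rVert\) ensuring all facets are equidistant. The trade-off is that the paper's Lagrange computation additionally yields the explicit spectrum \(\boldsymbol{\lambda}^*\) of the tangent state, which it then uses to locate the critical \(W_\mathrm{min}\) at which \(\lambda^*_{2j}=0\) and the ball first touches the simplex boundary; your approach would need a separate step to recover that information.
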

\begin{proof}
See Appendix \ref{sec:App_AWBBalls} (Subsec.~\ref{subsec:largest_AWBBalls}).
\end{proof}

First, let's discuss this result for positive values of $W_{\mathrm{min}}$. The inner radius \eqref{eq:rmaxAna} vanishes for $W_{\mathrm{min}}=1/(2j+1)$, corresponding to the fact that only the MMS state has a Wigner function with this minimal (and constant) value. The radius then increases as $W_{\mathrm{min}}$ decreases. At $W_{\mathrm{min}}=0$, it reduces to the radius of the largest ball of AWP states,
\begin{equation}
r_{\mathrm{in}}^{\mathrm{AWP}} = \frac{1}{2\sqrt{j(2j+1)(j+1)}}.
\end{equation}
Expressed as a function of dimension $d=2j+1$ and re-scaled to generalized Bloch length, this result was also recently found in the context of SU($d$)-covariant Wigner functions (i.e., as the phase space manifold changes dramatically with each Hilbert space dimension, rather than always being the sphere) \cite{Abgaryan2021b}.  Although our bound is tight for all $j$ in the SU(2) setting (i.e., there always exist orbits infinitesimally farther away that contain Wigner-negative states), it is unknown if this bound remains tight for such SU($d$)-covariant Wigner functions for $d>2$. 

At the critical value\footnote{In the limit $j\to\infty$, as $\Delta_{j,j}\to 2$ \cite{Weigert_contracting_2000}, Eq.~\eqref{Wmincritical} tends to $-1/2$.  This is discussed in more detail in the next section.}
\begin{equation}\label{Wmincritical}
W_{\mathrm{min}}=\frac{\Delta_{j,j} - (2 j+1)}{\Delta_{j,j} (2j+1)-1}<0,
\end{equation}
the spectrum \eqref{rhostar} acquires a first zero eigenvalue, $\lambda^*_{2j}=0$. This corresponds to the situation where $\boldsymbol{\lambda}^*$ is simultaneously on the ball surface, on a face of the polytope, and on an edge of the simplex; see the orange dots in Fig.~\ref{fig:criticalpolytope}. For more negative values of $W_{\mathrm{min}}$, Eq.~\eqref{rhostar} no longer represents a physical state because $\lambda^*_{2j}$ becomes negative. In this situation, in order to determine the radius of larger balls that contain only AWB states, additional constraints must be imposed in the optimisation procedure reflecting the fact that some elements of the spectrum of $\rho$ are zero. Since the possible number of zero eigenvalues depends on $j$, we will not go further in this development. However, in the end, when there is only one non-zero eigenvalue left (equal to 1, in which case the states are pure), the most negative $W_{\mathrm{min}}$  corresponds to the smallest kernel eigenvalue $\Delta_{j.j-1}$ (according to the conjecture \eqref{eq:kernel_eigenvalue_assumption}), and the radius is the distance $r=\sqrt{2j/(2j+1)}$ from pure states to the MMS.

\begin{figure}
    \centering
    \includegraphics[width=0.45\textwidth]{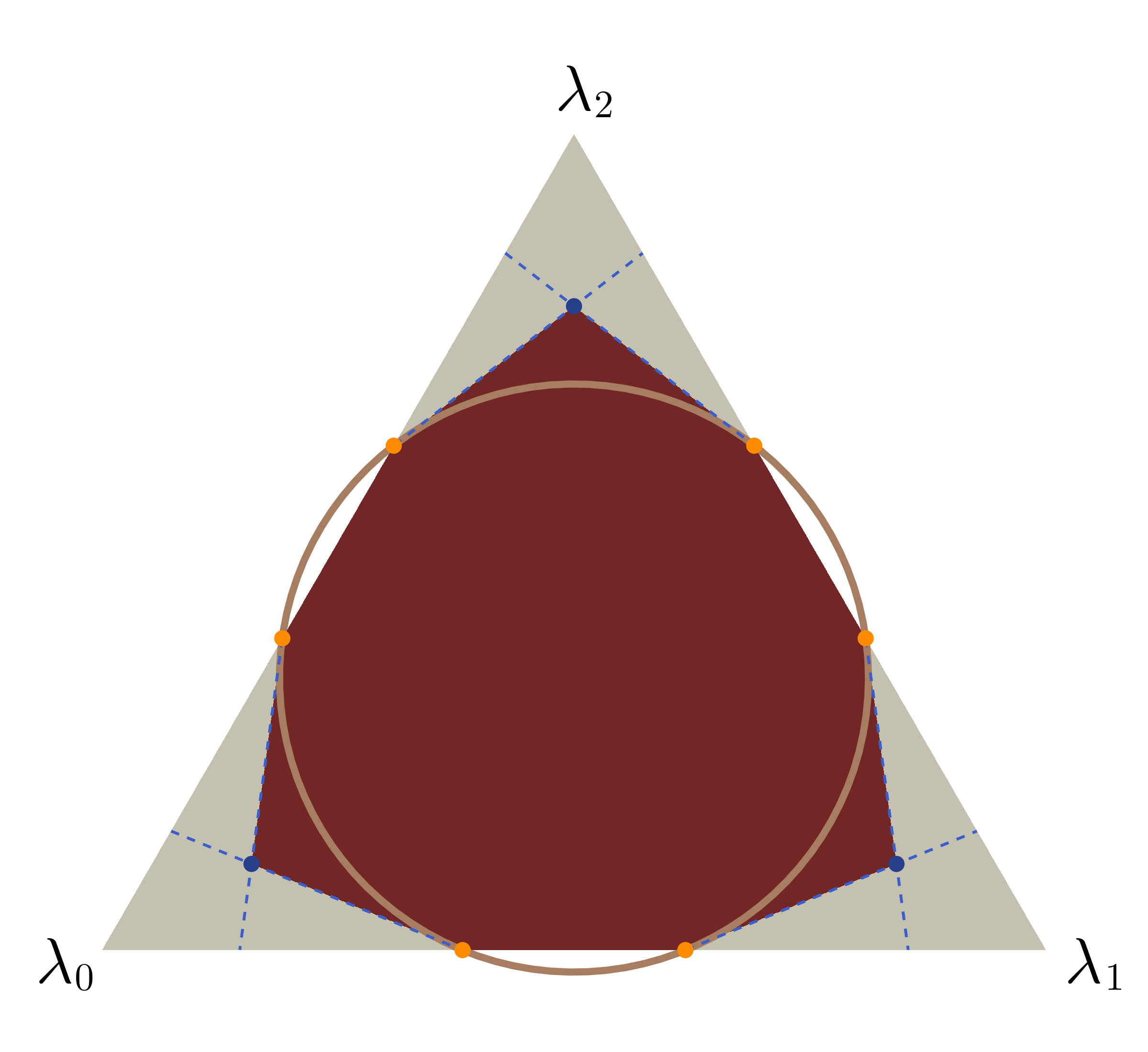}
    \caption{AWB polytope in the barycentric coordinate system for $j=1$ and $W_{\mathrm{min}}=\frac{1}{3}+\frac{2}{3} \sqrt{2} \left(\sqrt{5}-3\right)\approx -0.387$ as given by Eq.~\eqref{Wmincritical}. The structure is similar to Fig.~\ref{fig:spin-1-simplex_2} but the polytope occupies a larger portion of the state space. We omit the part of the polytope that is outside the simplex.}
    \label{fig:criticalpolytope}
\end{figure}

Finally, it should be noted that any state resulting from the permutation of the elements of $\boldsymbol{\lambda}^*$ is also on the surface of the AWB inner ball and verify a similar equality to \eqref{eq:AWP_hyperplanes} for any permutation $\pi$. Thus, considering all permutations of the elements of $\boldsymbol{\lambda}^*$ we can find all states located where the AWB polytope is tangent to the AWB inner ball, as shown in Fig.\ \ref{fig:spin-1-simplex_2} for $j=1$ and $W_{\mathrm{min}}=0$.

Proposition~\ref{AWPBalls} leads to a sufficient condition for being AWB, that is, $r\leq r_{\mathrm{in}}^{W_{\mathrm{min}}}$. A necessary condition can be obtained on the basis of the smallest outer ball containing the AWB polytope. We formulate the following conjecture for its radius.
\begin{conjecture}
\label{AWPouterBalls}
The radius of the smallest outer ball of the AWB polytope associated with a $W_{\mathrm{min}}$ value is
\begin{equation}\label{eq:r_out_conjecture}
    r_{\mathrm{out}}^{W_\mathrm{min}} = \sqrt{\frac{2j}{2j+1}} \left\lvert \frac{W_{\mathrm{min}}(2j+1) - 1}{\Delta_{j,j-1}(2j+1) - 1} \right\rvert,
\end{equation}
where $W_{\mathrm{min}} \in [ \Delta_{j,j-1}, \frac{1}{2j+1} ]$.
\end{conjecture}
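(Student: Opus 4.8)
The plan is to identify $r_{\mathrm{out}}^{W_\mathrm{min}}$ with the circumradius of the AWB polytope about the MMS and to locate the vertex that attains it. First I would argue that the minimal enclosing ball is centred on $\rho_0$: the polytope defined by \eqref{eq:THM} is invariant under the full permutation group $S_{2j+1}$ acting on the spectral coordinates, the smallest enclosing ball of a compact set is unique, and therefore its centre must be the unique permutation-fixed point of the simplex, namely the MMS. Consequently $r_{\mathrm{out}}^{W_\mathrm{min}}$ equals the maximal Hilbert--Schmidt distance from $\rho_0$ to a vertex of the polytope.

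Next I would use the vertex structure established in Sec.~\ref{subsec:AWPPolytope}. Since $r(\rho)$ is invariant under permutations of the eigenvalues, it suffices to examine the $2j$ ``two-value'' vertices of the minimal polytope, each obtained by omitting a single nearest-neighbour equality from $\mathcal{E}$: such a vertex has $n_1 = l+1$ entries equal to $a$ and $n_2 = 2j-l$ entries equal to $b$ (for $l=0,\dots,2j-1$), with $a,b$ fixed by normalization together with the binding hyperplane $\boldsymbol{\lambda}^\downarrow\!\cdot\boldsymbol{\Delta}^\uparrow = W_\mathrm{min}$. Solving these two linear equations for $a,b$ and using $n_1\big(a-\tfrac{1}{2j+1}\big)+n_2\big(b-\tfrac{1}{2j+1}\big)=0$ to eliminate $b$, the squared distance collapses to
\begin{equation}
r^2(l) = \frac{n_1 n_2}{2j+1}\,\frac{\big[1-(2j+1)W_\mathrm{min}\big]^2}{\big[n_1-(2j+1)S_l\big]^2},
\end{equation}
where $S_l=\sum_{i=0}^{l}\Delta_i^\uparrow$ is the sum of the $n_1$ smallest kernel eigenvalues. (For $W_\mathrm{min}<0$ some of these vertices lie outside the simplex, but they remain permutations of the same two-value spectra, so the computation is unaffected.)

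Evaluating at $l=0$, where $n_1=1$, $n_2=2j$ and $S_0=\Delta_0^\uparrow=\Delta_{j,j-1}$, reproduces exactly Eq.~\eqref{eq:r_out_conjecture}. The conjecture is therefore equivalent to the purely spectral statement that $l=0$ maximizes $r^2(l)$, i.e.\ that $f(l)=n_1 n_2\big/[n_1-(2j+1)S_l]^2$ attains its maximum at the endpoint $l=0$.

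The main obstacle is precisely this discrete optimization. Writing the denominator as $n_1-(2j+1)S_l=(2j+1)\sum_{i=0}^{l}\big(\tfrac{1}{2j+1}-\Delta_i^\uparrow\big)$, one sees it is a sequence that starts positive, rises, and returns to zero at $l=2j$ by Eq.~\eqref{eq:kernel_eigs_unit_sum}; since $n_1 n_2=(l+1)(2j-l)$ likewise peaks in the middle, both numerator and denominator of $f$ grow toward the centre, and showing that the quadratic denominator always dominates demands quantitative control of the partial sums $S_l$ well beyond the qualitative ordering in \eqref{eq:kernel_eigenvalue_assumption}. Already the single endpoint comparison $f(0)\geq f(2j-1)$ reduces to the nontrivial inequality $|\Delta_{j,j}|-|\Delta_{j,j-1}|\geq 2/(2j+1)$. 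A complete proof would thus hinge on closed-form estimates of $\Delta_{j,m}$ from Eq.~\eqref{eq:kernel_eigenvalues}---for example a convexity or log-concavity property of $l\mapsto S_l$ that would reduce the global optimization to neighbouring-vertex comparisons---and in its absence the claim remains supported only by the numerical verification of $f(0)\geq f(l)$ in each dimension.
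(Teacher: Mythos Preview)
Your proposal is correct and follows essentially the same route as the paper's Appendix~\ref{smallestouterball}: both reduce the outer radius to the maximal Hilbert--Schmidt distance from the MMS to one of the $2j$ two-value vertices of the minimal polytope, evaluate the $n=1$ (your $l=0$) vertex to obtain Eq.~\eqref{eq:r_out_conjecture}, and then acknowledge that the remaining step---showing this vertex is actually the farthest---requires quantitative control of the kernel-eigenvalue partial sums that is presently unavailable, so that the statement rests on numerics. Your explicit symmetry argument for why the enclosing ball must be centred on the MMS and your compact closed form $r^2(l)=\tfrac{n_1 n_2}{2j+1}\,[1-(2j+1)W_\mathrm{min}]^2/[n_1-(2j+1)S_l]^2$ are tidy additions the paper does not spell out, but the underlying reasoning and the identified obstruction coincide.
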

A detailed argument leading to this conjecture, which interestingly involves the $W$ state of many-body entanglement \cite{Dur_LOCC_2000}, is presented in Appendix \ref{sec:App_AWBBalls} (Subsec.~\ref{smallestouterball}). Numerics strongly support the conjecture.

\subsection{Infinite spin limit}
\label{sec:infinite-spin-limit}

An important structural relation between spin and bosonic systems is the well-known contraction from the former to the latter in the limit of infinite spin \cite{Arecchi_1972}.  Under this contraction, the following identifications between operators can be made as $j \rightarrow \infty$:
\begin{equation}
        \hat J_z \rightarrow \hat N, \quad
        \hat J_+ \rightarrow \hat a, \quad
        \hat J_- \rightarrow \hat a^\dagger,
\end{equation}
where $\hat N = \hat a^\dagger \hat a$ is the number operator and $\hat a^\dagger (\hat a)$ is the creation (annihilation) operator.  For states, the Dicke basis $\ket{j,m}$ contracts to the Fock basis $\ket{n}$ via
\begin{equation}
    \ket{j,j-n} \rightarrow \ket{n}.
\end{equation}
In particular, the collective spin-up state $\ket{j,j}$ becomes the bosonic vacuum $\ket{0}$.  Such a contraction allows one to study and compare the nature of the two different types of physical systems.  Furthermore, the spherical Wigner function used here \eqref{eq:defWignerFunction} is perfectly compatible with this contraction in the sense that it tends towards the original, planar Wigner function \cite{Weigert_contracting_2000, koczor_parity_2020}.  Here we argue that our results may be leveraged through this contraction to make novel statements about Wigner negativity in the original Wigner function of general mixed states.

While the description of the AWB polytopes becomes increasingly complicated as dimension increases, the two Hilbert-Schmidt balls are well-behaved.  In the infinite-spin limit, the inner ball vanishes for all cutoffs $W_{\mathrm{min}}$ but the outer ball does not.  Hence for any cutoff $W_{\mathrm{min}} \in [-2, 0]$ there exists a ball in the infinite-dimensional Hilbert space centred on the zero operator, seen as the maximally mixed state\footnote{Of course strictly speaking the zero operator is not a quantum state but it can be seen as the limit of a sequence of thermal states with increasing temperature.  It can also be seen as the limit of finite-dimensional maximally mixed states, $\lim_{d\rightarrow\infty} \mathbb{1}_d/d$.  Either way, in the limit it remains an element of the Hilbert space of operators and so we may consider distances from it.}, that may contain AWB states.  From \eqref{eq:r_out_conjecture} we see that the radius of this ball is $|W_{\mathrm{min}}|/2$ because $\Delta_{j,j-1} \rightarrow -2$ in the contraction \cite{Weigert_contracting_2000}.  In the particular case of $W_{\mathrm{min}}=0$, the outer bosonic ball vanishes.  This implies that bosonic AWP states cannot exist.

On the other hand, if $W_{\mathrm{min}} < 0$, then there exists a bosonic ball (i.e., some subset of Hilbert space) such that any state outside of it is guaranteed to have a Wigner function that can be forced to dip below $W_{\mathrm{min}}$ under some appropriate unitary action.  This is an interesting result on how state mixedness relates to Wigner negativity in the bosonic case, which is notably complicated \cite{Narcowich_spectrum, Gracia_mixed_1988, Brocker_Werner_1995, Mandilara_extending_hudson_2010}.  The vanishing of the inner ball does not imply the lack of bosonic AWB states, it just implies there is no sufficient condition based only on purity.  Interesting further work could be to prove the existence (or lack thereof) of bosonic AWB states.

\section{Comparison between Wigner and Glauber-Sudarshan positivity}
\label{sec:entanglement}

Another common quasi-probability distribution studied in the context of single spins is the Glauber-Sudarshan function, defined through the equality
\begin{equation}\label{Pfunc}
\rho=\frac{2j+1}{4 \pi} \int P_{\rho}(\Omega)\,\ket{\Omega}\bra{\Omega}\, d \Omega.
\end{equation}
The Glauber-Sudarshan function is not unique because it is possible to add high-order spherical harmonics to it (those with $L>2j$) while maintaining equality \eqref{Pfunc}\footnote{A similar freedom exists in the definition of the Wigner function, and has been exploited in \cite{2022Fleischhauer} to eliminate the negativity of the Wigner function of a qubit. Here, however, we define a unique Wigner function \eqref{eq:defWignerFunction} by requiring that it tends to the Wigner function for continuous variables systems when $j\to\infty$.}. Negative values of all possible Glauber-Sudarshan functions representing the same state can be interpreted as the presence of entanglement within the multi-qubit realization of the system \cite{Bohnet-Waldraff-PPT_2016}. In other words, a general state $\rho$ of a single spin-$j$ system admits a positive Glauber-Sudarshan function if and only if the many-body realization is separable (necessarily over symmetric states).  This follows from the definition \eqref{Pfunc} of the Glauber-Sudarshan function as the expansion coefficients of a state $\rho$ in the spin coherent state projector basis, and the fact that spin coherent states are the only pure product states available when the qubits are indistinguishable.

States that admit a positive Glauber-Sudarshan function after any global unitary transformation are called \textit{absolutely classical} spin states \cite{Bohnet-Waldraff2017absolutely} or \textit{symmetric absolutely separable (SAS)} states \cite{SerranoEnsstiga2023}.  In this section we focus entirely on the case of $W_\mathrm{min}=0$ because negative values of the Wigner function are generally used as a witness of non-classicality and compare the AWP polytopes to the known results on SAS states.  In the context of single spins, the set of SAS states is only completely characterized for spin-1/2 and spin-1. We also show that the Wigner negative volume \eqref{eq:WignerNeg} of a positive Glauber-Sudarshan function is upper-bounded by the Wigner negative volume of a spin coherent state, see Subsec.~\ref{SubSec:BoundWignerNeg}.

\subsection{Spin-1/2}

In the familiar case of a single qubit state $\rho$, the spectrum $(\lambda, 1 - \lambda)$ is characterized by one number $\lambda$.  The kernel eigenvalues, Eq.\ \eqref{eq:kernel_eigenvalues}, are
\begin{equation}
    \Delta_{0} = \frac{1}{2}(1 - \sqrt{3}), \quad \Delta_{1} = \frac{1}{2}(1 + \sqrt{3}) = 1 - \Delta_0.
\end{equation}
Letting $\lambda \geq \frac{1}{2}$ denote the larger of the two eigenvalues, the strong ordered form \eqref{eq:ordered_awp_ineq} becomes
\begin{equation}
\begin{split}
    \lambda_0 \Delta_0 + \lambda_1 \Delta_1 &= \lambda \Delta_0 + (1-\lambda)(1 - \Delta_0) \\
   &= \lambda(2\Delta_0 - 1) + 1 - \Delta_0.
\end{split}
\end{equation}
Thus the AWP polytope is described, in the 1-dimensional projection to the $\lambda$ axis, as
\begin{equation}
    \frac{1}{2} \leq \lambda \leq \frac{1-\Delta_0}{1-2\Delta_0} = \frac{1}{2} + \frac{1}{2\sqrt{3}}.
\end{equation}
This may be equivalently expressed either in terms of purity $\gamma$ or Bloch length $|\mathbf{n}| = \sqrt{2\gamma - 1}$,
\begin{equation}
    \frac{1}{2} \leq \gamma \leq \frac{2}{3} \qquad \text{and} \qquad  |\mathbf{n}| \leq \frac{1}{\sqrt{3}}.
\end{equation}
Additionally, the distance to the maximally mixed state via Eq.\ \eqref{eq:distancetoMMS} is $r \leq  1/\sqrt{6}$, which matches with the smallest ball of AWP states derived earlier, Eq.\ \eqref{eq:rmaxAna}.  In the case of spin-1/2 this radius coincides with the largest ball containing nothing but AWP states.

Regarding absolute Glauber-Sudarshan-positivity, all qubit states are SAS.  This is a consequence of the qubit pure states being equivalent to  spin-1/2 coherent states. Thus AWP qubit states are a strict subset of SAS qubit states.

\begin{figure}
    \centering
    \includegraphics[width=0.475\textwidth]{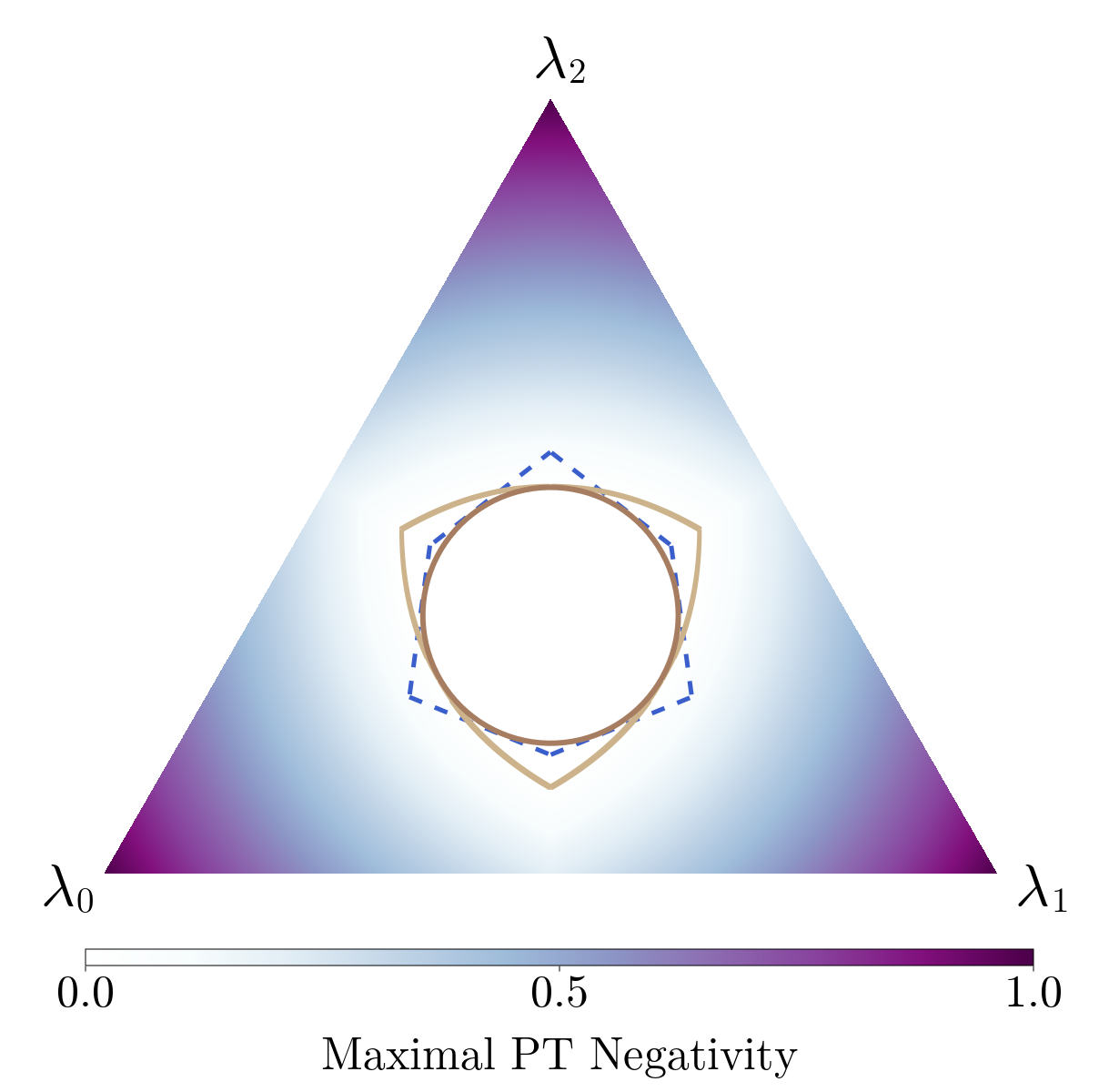}
    \caption{Maximal PT negativity over each unitary orbit in the $j=1$ simplex of state spectra.  The dashed blue line and red circle are respectively the AWP polytope and ball. The camel curve shows the boundary at which the negativity along the unitary orbit becomes non-zero.}
    \label{fig:OrbitMaximalNegativity_N=2}
\end{figure}

Furthermore, due to the invariance of negativity under rigid rotation, for a single qubit there is no distinction between a state being positive (in either the Wigner or $P$ sense) and being absolutely positive.  This means that any state with Bloch radius $|\mathbf{n}|\in (1/\sqrt{3},1]$ has a positive $P$ function but a negative Wigner function.  This is perhaps the simplest example of the fact that, unlike the planar phase space associated with optical systems, in spin systems Glauber-Sudarshan positivity does not imply Wigner positivity.

\subsection{Spin-1}
\label{sec:spin1}
For qutrits the set of AWP states and the set of SAS states are both more complicated, with neither being a strict subset of the other.  For SAS states we need the following result in \cite{SerranoEnsstiga2023}: \emph{the maximal value of the negativity, in the sense of the PPT criterion, in the unitary orbit of a two-qubit symmetric (or equivalently a spin-1) state $\rho$ with spectrum $\lambda_0\geq\lambda_1\geq\lambda_2$ is}
\begin{equation}
\label{eq:maxNeg_j1/2}
    \max\left[ 0,\sqrt{\lambda_0^2+(\lambda_1-\lambda_2)^2}-\lambda_1-\lambda_2 \right].
\end{equation}
In Fig.~\ref{fig:OrbitMaximalNegativity_N=2}, we plot the resulting maximal negativity in the $j=1$ simplex with the AWP polytope. There are clearly regions of spectra that satisfy either, both, or neither of the AWP and SAS conditions.  Thus already for spin-1 there exist states with a positive $P$ function and a negative $W$ function and vice-versa.  For $j=1$ specifically, it was also shown in \cite{SerranoEnsstiga2023} that the \emph{largest} ball of SAS states has a radius $r_{\mathrm{in}}^{P}=1/(2\sqrt{6})\approx 0.20412$, which is the same value as the radius $r_{\mathrm{in}}^{\mathrm{AWP}}=1/(2\sqrt{6})$. Hence, for $j=1$, the largest ball of AWP states coincides with the largest ball of SAS states as we can see in Fig.~\ref{fig:OrbitMaximalNegativity_N=2}.

We now illustrate the procedure described in Appendix \ref{smallestouterball} and compute the vertex states and their radii for the case of spin-$1$. The two diagonal states associated to the vertices of the minimal polytope for $j=1$ (see Fig.\ \ref{fig:minimalpolytope}) are 
\begin{align}
        \rho_{\text{v}_1} &= \omega_1 \ketbra{1,-1}{1,-1} \nonumber \\
        &\quad + \frac{1-\omega_1}{2}(\ketbra{1,0}{1,0} + \ketbra{1,1}{1,1} ), \\
        \rho_{\text{v}_2} &= \omega_2 ( \ketbra{1,-1}{1,-1} + \ketbra{1,0}{1,0} ) \nonumber \\
        &\quad + (1 - 2\omega_2)\ketbra{1,1}{1,1}
\end{align}
where the parameters $\omega_1$ and $\omega_2$ are found by solving the AWP criterion \eqref{eq:ordered_awp_eq}:
\begin{equation}
    \begin{split}
        \omega_1 &= \frac{\Delta_{1,-1} + \Delta_{1,1}}{\Delta_{1,-1} + \Delta_{1,1} - 2\Delta_{1,0}} = \frac{1}{15}(5 + \sqrt{10}),\\
    \omega_2 &= \frac{\Delta_{1,1}}{2\Delta_{1,1}-\Delta_{1,0}-\Delta_{1,-1}} = \frac{1}{6} \left(2 + \sqrt{7-3 \sqrt{5}}\right).
    \end{split}
\end{equation}
The two Hilbert-Schmidt radii \eqref{eq:distancetoMMS} of the vertex states are then
\begin{equation}
\begin{split}
    r_{\text{v}_1} &= r_{\text{out}}^\mathrm{AWP} = \frac{1}{\sqrt{15}} \approx 0.2582, \\
    r_{\text{v}_2} &= \sqrt{\frac{1}{6} \left(7-3 \sqrt{5}\right)} \approx 0.2205 .
\end{split}
\end{equation}
As conjectured, we see that $r_{\text{v}_1} = r_{\text{out}}^W$ for spin-1.

\begin{figure}[b!]
    \centering
\includegraphics[width=0.45\textwidth]{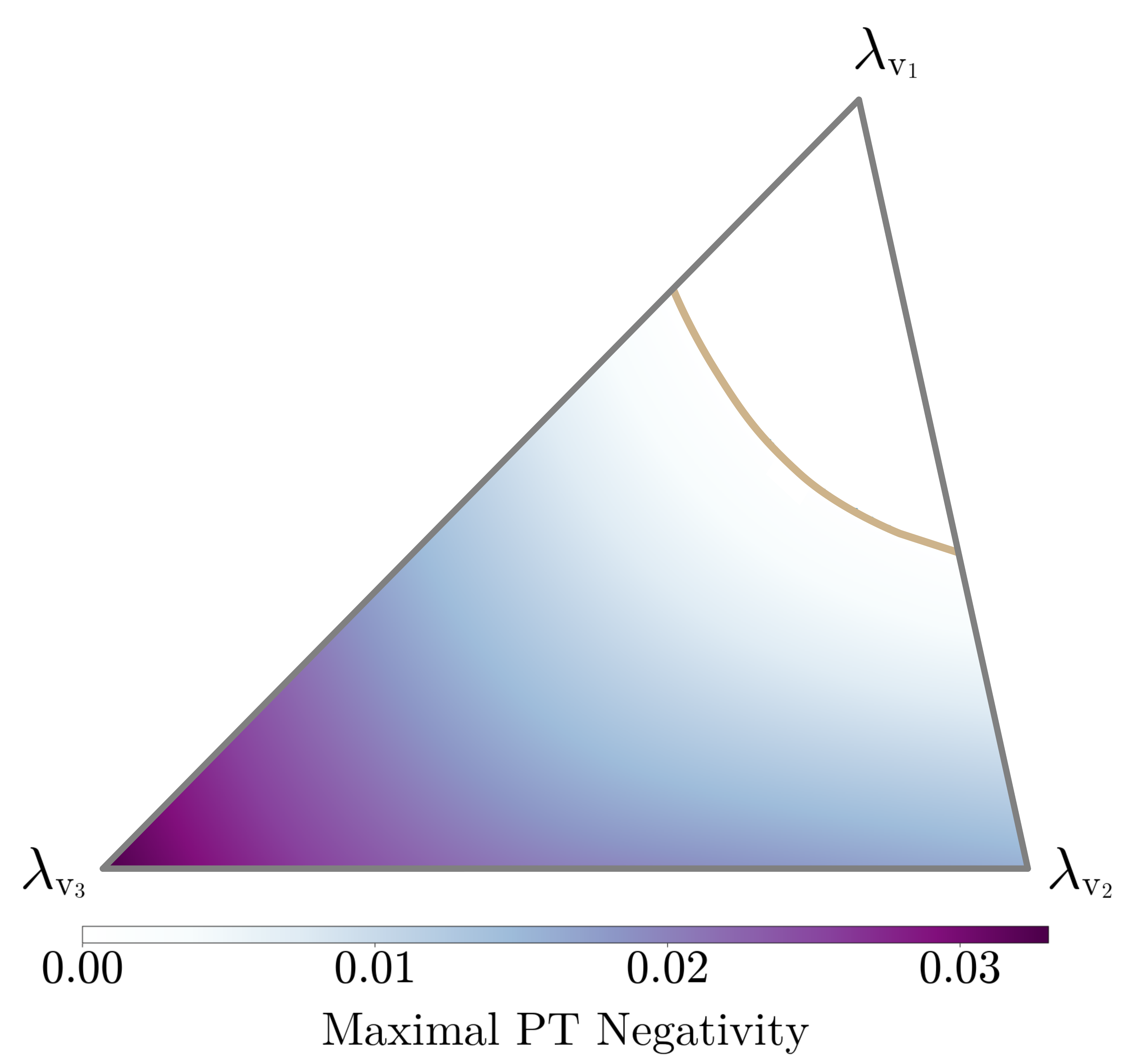}
    \caption{Maximal PT negativity over each unitary orbit on the face of the minimal $j=3/2$ AWP polytope. The camel curve shows the boundary at which the negativity along the unitary orbit becomes non-zero. The notation of the vertices corresponds to the eigenspectra given in Table \ref{table:polytopeVertices}.}
\label{fig:OrbitMaximalNegativity_N=3}
\end{figure}

\subsection{Spin-3/2}
\label{sec:spin3/2}

For spin-$3/2$, a numerical optimization (see Ref.~\cite{SerranoEnsstiga2023} for more information) yielded the maximum negativity (in the sense of the negativity of the partial transpose of the state) in the unitary orbit of the states located on a face of the polytope. The results are displayed in Fig.~\ref{fig:OrbitMaximalNegativity_N=3} where, similar to the spin-1 case, we observe both SAS and entangled states on the face of the minimal AWP polytope. A notable difference is that, for $j=3/2$, the largest ball containing only SAS states has a radius 
$r_{\mathrm{in}}^{P}=1/(2\sqrt{19})$~\cite{SerranoEnsstiga2023} which is strictly smaller than $r_{\mathrm{in}}^{\mathrm{AWP}}=1/(2\sqrt{15})$. Therefore, the SAS states on the face of the polytope are necessarily outside this ball.

\subsection{Spin-\texorpdfstring{$j>3/2$}{TEXT}}

In Fig.~\ref{fig:radiicomparison}, we compare the radius of the AWP ball (\ref{eq:rmaxAna}) with the lower bound on the radius of the ball of SAS states \cite{Bohnet-Waldraff2017absolutely}
\begin{equation}
\label{eq:boundSAS}
    r^P \equiv \frac{\left[(4j+1)\tbinom{4j}{2j}-(j+1)\right]^{-1/2}}{\sqrt{4j+2}}\leq r^P_{\mathrm{in}}.
\end{equation}
This plot suggests that the balls of AWP states can be much larger than the balls of SAS states. This is confirmed by our numerical observations that sampling the hypersurface of the polytope for $j=2$, $5/2$ and $3$ always yields states that have negative partial transpose in their unitary orbit. We also plot in Fig.~\ref{fig:radiicomparison} the conjectured radius $r_{\mathrm{out}}^\mathrm{AWP}$ of the minimal ball containing all AWP states.

Notably, the scalings of $r_{\mathrm{out}}^\mathrm{AWP}$ and $r_{\mathrm{in}}^\mathrm{AWP}$ with $j$ are different.  The scaling $r_{\mathrm{in}}^\mathrm{AWP} \propto j^{-3/2}$ follows directly from Eq.~\eqref{eq:rmaxAna}. The scaling $r_{\mathrm{out}}^\mathrm{AWP} \propto j^{-1}$ can be explained by noting that the infinite-spin limit of the SU(2) Wigner kernel is the Heisenberg-Weyl Wigner kernel, which only has the two eigenvalues $\pm 2$; see Sec.\ \ref{sec:infinite-spin-limit} and Refs.~\cite{Weigert_contracting_2000, koczor_parity_2020}.  Hence for sufficiently large $j$ we may approximate $\Delta_{j,j-1} \approx -2$.  The Laurent series of Eq.\ \eqref{eq:r_out_conjecture} with this approximation and $W_{\mathrm{min}} = 0$ has leading term $1/(4j)$, exactly matching the results shown in Fig.\ \ref{fig:radiicomparison}.

\begin{figure}
    \centering
\includegraphics[width=0.475\textwidth]{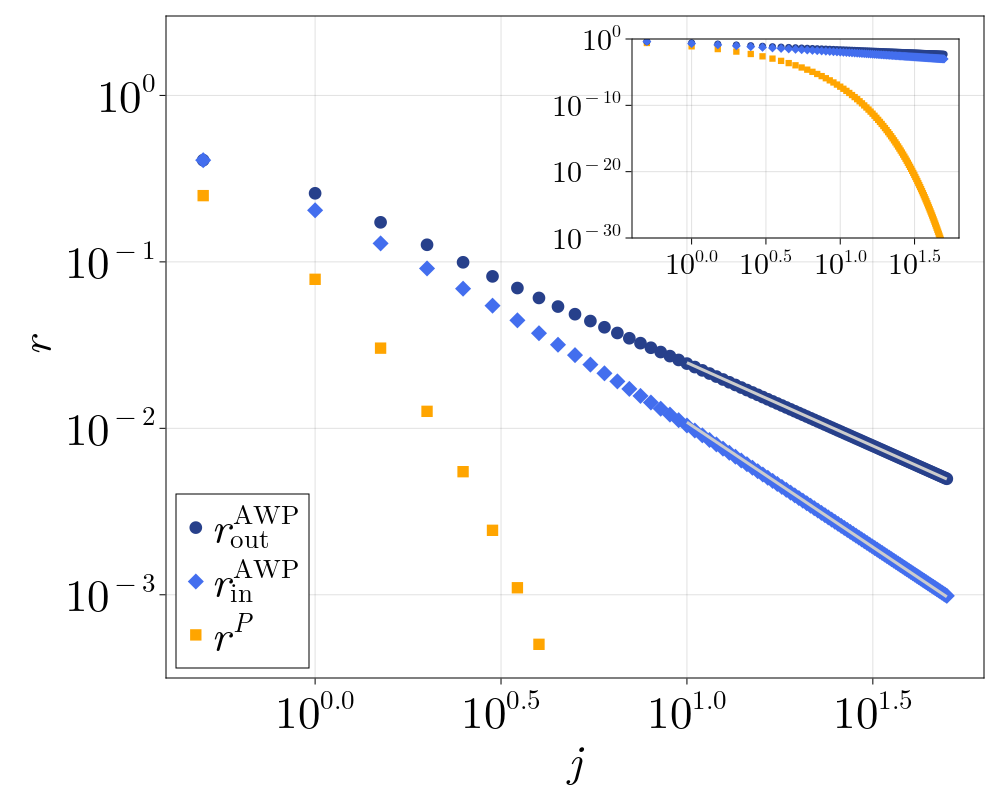}
    \caption{Comparison of the radii of the outer AWP ball (dark blue) and the inner AWP ball (blue)  and the lower bound on the SAS ball radius (orange). For $j\geq10$, we found excellent fits with $r_{\mathrm{out,fit}}^\mathrm{AWP}=0.25\times j^{-1}$ and $r_{\mathrm{in,fit}}^\mathrm{AWP}=0.336\times j^{-1.5}$. These are explained in the text.}
    \label{fig:radiicomparison}
\end{figure}

\subsection{Bound on Wigner negative volume}
\label{SubSec:BoundWignerNeg}
Here we include a related but independent result on the negative volume of spin states.
The spin-1 case showed us that there are SAS states outside the AWP polytope, i.e., with a Wigner function admitting negative values. Here, we show quite generally that the Wigner negative volume \eqref{eq:WignerNeg} of states with an everywhere positive Gluaber-Sudarshan function (in particular SAS states), denoted hereafter by $\rho_{P\geqslant0}$, is upper bounded by the Wigner negative volume of spin coherent states. Indeed, such states can always be represented as a mixture of coherent states
\begin{equation} \rho_{P\geqslant0}=\sum_{i}w_{i}\left|\alpha_{i}\right\rangle \left\langle \alpha_{i}\right|
\end{equation}
with $w_{i}\geqslant 0$ and $\sum_{i}w_{i}=1$.
Their Wigner negative volume can then be upper bounded as follows
\begin{equation} 
\begin{aligned}
    \delta(\rho_{P\geqslant0})	& =	\frac{1}{2}\int_{\Gamma}\left|W_{\rho_{P\geqslant0}}(\Omega)\right|d\mu(\Omega)-\frac{1}{2}\\
    &
    =	\frac{1}{2}\int_{\Gamma}\left|\sum_{i}w_{i}W_{\left|\alpha_{i}\right\rangle }(\Omega)\right|d\mu(\Omega)-\frac{1}{2}\\
    &
    \leqslant	\underbrace{\sum_{i}w_{i}}_{=1}\underbrace{\left(\frac{1}{2}\int_{\Gamma}\left|W_{\left|\alpha_{i}\right\rangle }(\Omega)\right|d\mu(\Omega)\right)}_{=\delta\left(|\alpha\rangle\right)+\frac{1}{2}}-\frac{1}{2}\\
    & =\delta\left(|\alpha\rangle\right)
\end{aligned} 
\end{equation}
 where $\delta\left(|\alpha\rangle\right)$ is the Wigner negative volume of a spin coherent state. Since it has been observed that such volume decreases with $j$~\cite{davis2021wigner}, the same is true for states with positive Glauber-Sudarshan function.

\section{Conclusion}
\label{sec:conclusion}
  
Given the ever-rising importance of spin quasiprobability distributions in fields like quantum information science, quantum many-body dynamics \cite{2022Fleischhauer, 2023Fleischhauer, 2021Rabl}, and quantum thermodynamics \cite{Gherardini_quasi_in_thermo_PRX_2024}, we have studied in this work the properties of the spin Wigner function of finite-dimensional quantum systems, in particular the non-classicality of the unitary orbits of mixed spin-$j$ states, highlighting important differences with infinite-dimensional systems with continuous variables. Our results shed new and interesting light on the positivity of the Wigner function in spin-$j$ systems, focusing on its relation with purity and entanglement. Our first result is Proposition~\ref{AWP_Theorem}, which gives a complete characterization for any spin quantum number $j$ of the set of absolutely Wigner bounded (AWB) states in the form of a polytope centred on the maximally mixed state in the simplex of mixed spin states.  This amounts to an extension and alternative derivation of results from \cite{Abgaryan2020, Abgaryan2021} in the setting of quantum spin.  We have studied the properties of the vertices of this polytope for different spin quantum numbers, as well as its largest inner and smallest outer Hilbert-Schmidt balls.  In particular, we have shown that the radii of the inner and outer balls scale differently as a function of $j$ (see Eqs.~\eqref{eq:rmaxAna} and \eqref{eq:r_out_conjecture} as well as Fig.~\ref{fig:radiicomparison}).  We have provided an equivalent condition for a state to be AWB based on majorization theory (Proposition~\ref{AWP_Majorization}).  We have compared our results on the positivity of the Wigner function with those on the positivity of the spherical Glauber-Sudarshan function, the latter of which can be equivalently used as a classicality criterion for spin states or a separability criterion for symmetric multiqubit states. The spin-1 and spin-3/2 cases, for which analytical results are known, were closely examined and important differences were highlighted, such as the existence of Wigner-negative absolutely separable states, and, conversely, the existence of entangled absolutely Wigner-positive states.  This novel fact represents a key distinction from the infinite-dimensional setting where a positive Glauber-Sudarshan function trivially implies a positive Wigner function \cite{Cahill_Glauber_quasi_1969, Lee_nonclassical_depth_1991}.  The infinite-spin limit of these polytopes and their Hilbert-Schmidt balls have been analyzed, and it was concluded that absolute Wigner-positivity cannot exist in infinite dimensions.  Interestingly however, our techniques cannot rule out the existence of absolutely Wigner-bounded states for non-zero cutoffs in the bosonic setting because the outer Hilbert-Schmidt ball, which represents a necessary AWB condition, does not vanish in the infinite-spin limit.  Future work is needed to investigate the possible existence of such states.

There are several other directions for future work. A notable observation drawn from our numerics is that the set of SAS states appears to shrink relative to the set of AWP states as $j$ increases, which in turn occupies a progressively smaller volume of the simplex. Further research is needed to explore this behaviour, using, e.g., the group-theoretic results of \cite{Klimov_Chumakov_2000}. A related direction could be to explore the ratio of the volume of the AWB polytopes to the volume of the full simplex; this would basically be a \textit{global indicator of classicality} like those introduced and studied in Refs.~\cite{Abbasli2020, Abgaryan2020, Abgaryan2021b} particularised to spin systems.

Another perspective, as briefly mentioned in Sec.\ \ref{sec:AWPPolytope}, is to apply the techniques presented here to other distinguished quasiprobability distributions. For example, preliminary results suggest that the absolutely Husimi bounded (AHB) polytopes have the same geometry as the simplex, but are simply reduced in size by a factor depending on $Q_{\mathrm{min}}\in[0,\tfrac{1}{2j+1}]$.  Future work could explore this further and investigate its consequences for the geometric measure of entanglement of mixed multiqubit symmetric states. Another idea is to study how these polytopes change with respect to the spherical $s$-ordering parameter (see Eq.~\eqref{sSWkernel}).

Finally, it would be intriguing to connect the lower bound on the Wigner function in the unitary orbit to the accuracy achievable in simulating the general unitary (or even dissipative) many-body quantum dynamics of spin systems efficiently using stochastic trajectories~\cite{2022Fleischhauer, 2023Fleischhauer, 2021Rabl}. Indeed, it could be expected that states with a positive lower bound would be more accurately simulated by these trajectories where the Wigner function is used as an actual probability distribution obeying a certain Fokker-Planck equation. Additionally, this bound could be linked to potential quantum advantages in applications such as parameter estimation, where the presence of negative values in quasiprobability distributions might enhance the precision of quantum sensing protocols. However, these questions go beyond the scope of this work and merit further study.

\begin{acknowledgments}
We would like to thank Yves-Eric Corbisier for his help in creating Fig.~\ref{fig:spin-3/2-simplex} with Blender~\cite{Blender}. Most of the other figures were produced with the package Makie~\cite{Makie}. We would also like to thank V.\ Abgaryan and his colleagues for their correspondence regarding Refs.\ \cite{Abgaryan2021,  Abgaryan2020, Abgaryan2021b}. This work was supported in part by the Natural Sciences and Engineering Research Council of Canada. Computational resources were provided by the Consortium des Equipements de Calcul Intensif (CECI), funded by the Fonds de la Recherche Scientifique de Belgique (F.R.S.-FNRS) under Grant No. 2.5020.11. 
\end{acknowledgments}

\appendix

\section{Proof of relation \eqref{identity2}}
\label{sec:remarkablerelation}
We show here that the eigenvalues $\Delta_{m}\equiv \Delta_{j,m}$ of the Wigner kernel \eqref{Wignerkernel} verify
\begin{equation}
    \sum_{m=-j}^j \Delta_{m}^2 = 2j+1.
\end{equation}
Using the expression \eqref{eq:kernel_eigenvalues} we get
\begin{equation}
\label{eq:sum_ev_squared}
    \begin{aligned}
    \sum_{m=-j}^j\Delta_m^2 = \sum_{m=-j}^j\sum_{L,L'=0}^{2j}&\frac{(2L+1)(2L'+1)}{(2j+1)^2}\\
    & \times C_{j,m;L,0}^{j,m}C_{j,m;L',0}^{j,m}
    \end{aligned}
\end{equation}
The Clebsh-Gordan coefficients satisfy the following relations~\cite{Varshalovich_1988}
\begin{equation}\label{Clebrule1}
    C_{a,\alpha;b,\beta}^{c,\gamma}=(-1)^{a-\alpha}\sqrt{\frac{2c+1}{2b+1}}C_{a,\alpha;c,-\gamma}^{b,-\beta}
\end{equation}
\begin{equation}\label{Clebrule2}
    \sum_{\alpha,\beta=-j}^j C_{a,\alpha;b,\beta}^{c,\gamma}C_{a,\alpha;b,\beta}^{c',\gamma'} = \delta_{cc'}\delta_{\gamma\gamma'}.
\end{equation}
Hence, by splitting the sum over $m$ in two
\begin{equation}
    \sum_{m}C_{j,m;L,0}^{j,m}C_{j,m;L',0}^{j,m} = \sum_{m_1,m_2}C_{j,m_1;L,0}^{j,m_2}C_{j,m_1;L',0}^{j,m_2}
\end{equation}
and using \eqref{Clebrule1} and \eqref{Clebrule2}, 
we get from \eqref{eq:sum_ev_squared}
\begin{equation}
    \begin{aligned}
    \sum_{m=-j}^j \Delta_{m}^2 & = \frac{1}{2j+1}\underbrace{\sum_{L=0}^{2j} 2L+1}_{=(2j+1)^2}\\
    & = 2j+1
    \end{aligned}
\end{equation}

\section{Barycentric coordinates}
\label{sec:barycentricCoordinatesSystem}

A mixed spin-$j$ state necessarily has eigenvalues $\lambda_i$ that are positive and add up to one: 
\begin{equation}
\lambda_i\geq 0,\qquad \sum_{i=0}^{2j}\lambda_i=1.
\end{equation}
This means that every state $\rho$ has its eigenvalue spectrum in the probability simplex of dimension $2j$. For example, for $j=1$, this simplex is a triangle shown in grey in Fig.~\ref{fig:barycentricCoordinatesSystem}. In geometric terms, the spectrum $(\lambda_0,\lambda_1,\lambda_2)$ defines the barycentric coordinates of a point $\boldsymbol{\lambda}$ in the simplex, as it can be considered as the centre of mass of a system of $2j$ masses placed on the vertices of the triangle.

\begin{figure}
\centering
\includegraphics[width=0.375\textwidth]{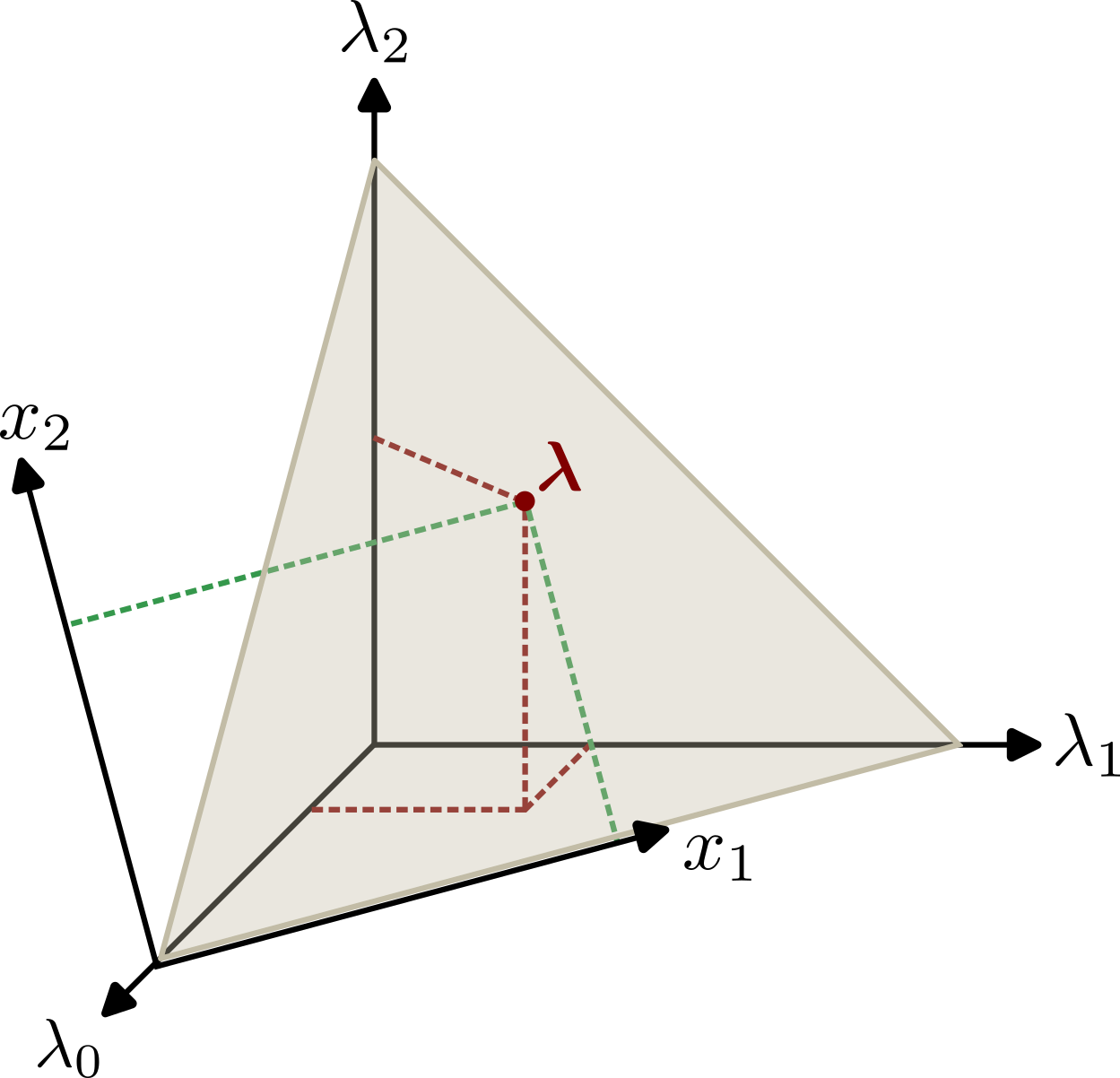}
    \caption{Barycentric and cartesian coordinate systems of spin state spectra for $j=1$. The simplex in this case is an equilateral triangle, shown here in gray. The red dot corresponds to a given spectrum and its projections onto the barycentric and Cartesian coordinate system are indicated by the red and green dashed lines respectively.}
    \label{fig:barycentricCoordinatesSystem}
\end{figure}

Let's explain how to go from the barycentric coordinate system to the Cartesian coordinate system spanning the simplex. If we denote by $\{\mathbf{r}^{(i)}: i=0,\ldots,2j\}$ the set of $2j+1$ vertices of the simplex, the Cartesian coordinates of a point $\boldsymbol{\lambda}$ are given by
\begin{equation}
    x_k = \sum_{i=0}^{2j} \lambda_i \, r_{k}^{(i)}
\end{equation}
where $r_{k}^{(i)}$ is the $k$-th Cartesian coordinate of the $i$-th vertex of the simplex. For $j=1$, the simplex is an equilateral triangle with vertices having Cartesian coordinates  $\mathbf{r}_1=(0,0)$, $\mathbf{r}_2=(1,0)$ and $\mathbf{r}_3=(1/2,\sqrt{3}/2)$. For $j=3/2$, it is a regular tetrahedron with vertices having Cartesian coordinates $\mathbf{r}_1=(0,0,0)$, $\mathbf{r}_2=(1,0,0)$, $\mathbf{r}_3=(1/2,\sqrt{3}/2,0)$ and $\mathbf{r}_4=(1/2,(2\sqrt{3})^{-1}, \sqrt{2/3})$.

\section{AWP polytope vertices for $j\leq 2$}
\label{sec:polytope_coordinates}

We give in Table~\ref{table:polytopeVertices} for $j\leq 2$ the spin state spectra associated with the vertices of the minimal AWP polytope as they can be determined as explained in Sec.~\ref{subsec:AWPPolytope}.

\begin{table}[h!]
\begin{centering}
\begin{tabular}{|c|c|c|}
\hline 
$j$ & Vertices in  barycentric coordinates \tabularnewline
\hline 
\hline 
1/2 & $\boldsymbol{\lambda}_{\mathrm{v}_1}\approx $ (0.789, 0.211)\tabularnewline
\hline 
1 & $\boldsymbol{\lambda}_{\mathrm{v}_1} \approx $ (0.423, 0.423, 0.153)\tabularnewline
 & $\boldsymbol{\lambda}_{\mathrm{v}_2} \approx $ (0.544, 0.228, 0.228)\tabularnewline
\hline 
3/2 & $\boldsymbol{\lambda}_{\mathrm{v}_1}\approx $ (0.294, 0.294, 0.294, 0.119)\tabularnewline
 & $\boldsymbol{\lambda}_{\mathrm{v}_2} \approx $ (0.33, 0.33, 0.170, 0.170)\tabularnewline
 & $\boldsymbol{\lambda}_{\mathrm{v}_3} \approx $ (0.4, 0.2, 0.2, 0.2)\tabularnewline
 \hline 
2 & $\boldsymbol{\lambda}_{\mathrm{v}_1} \approx $ (0.313, 0.172, 0.172, 0.172, 0.172)\tabularnewline
 & $\boldsymbol{\lambda}_{\mathrm{v}_2} \approx $ (0.266, 0.266, 0.156, 0.156, 0.156)\tabularnewline
  & $\boldsymbol{\lambda}_{\mathrm{v}_3} \approx $ (0.24, 0.24, 0.24, 0.14, 0.14)\tabularnewline
   & $\boldsymbol{\lambda}_{\mathrm{v}_4} \approx $ (0.226, 0.226, 0.226, 0.226, 0.097)\tabularnewline
\hline 
\end{tabular}
\caption{Barycentric coordinates (corresponding to the eigenspectrum of a mixed spin state) of the vertices of the minimal polytope of AWP states. \label{table:polytopeVertices}}
\par\end{centering}
\end{table}

\section{Inner and outer AWB balls}
\label{sec:App_AWBBalls}

\subsection{Largest ball containing only AWB states}
\label{subsec:largest_AWBBalls}
Let us first consider the radius $r_{\mathrm{in}}^{W_\mathrm{min}}$ of the largest ball centered on the MMS contained in the polytope of AWB states and find a state $\rho^*$ that is both on the surface of this ball and on a face of the polytope. Denoting by $r(\rho)$ the Hilbert-Schmidt distance between a state $\rho$ and the MMS,
\begin{equation}
r(\rho) = \lVert \rho - \rho_0 \rVert_{\mathrm{HS}} =  \sqrt{\mathrm{Tr}\left[\left(\rho-\rho_0\right)^2\right]},
\end{equation}
we have that all quantum states with $r(\rho)\leq r_{\mathrm{in}}^{W_\mathrm{min}}$ are AWB. This distance is equivalent to the Euclidean distance in the simplex between the spectra $\boldsymbol{\lambda}$ and $\boldsymbol{\lambda}_{0}$ of $\rho$ and the MMS respectively, i.e.,
\begin{equation*}
r(\rho) = \sqrt{\left(\sum_{i=0}^{2j}\lambda_{i}^{2}\right)-\frac{1}{2j+1}} = \lVert\boldsymbol{\lambda}-\boldsymbol{\lambda}_{0}\rVert.
\end{equation*}
In order to find the radius $r_{\mathrm{in}}^{W_{\mathrm{min}}}$ (see Fig.~\ref{fig:minimalpolytope} for $W_{\mathrm{min}}=0$) of the largest inner ball of the AWB polytope, we need to find the spectra on the hyperplanes of the AWB polytope with the minimum distance to the MMS. Mathematically, this translates in the following constrained minimization problem
\begin{equation}\label{eq:Minimization_rin}
\min_{\boldsymbol{\lambda}} \; \lVert\boldsymbol{\lambda}-\boldsymbol{\lambda}_0\rVert^2 \;\;\;
\text{ subject to } \left\{\begin{array}{l}
\sum_{i=0}^{2j}\lambda_{i}=1\\[8pt]
\boldsymbol{\lambda\cdot\Delta}=W_{\mathrm{min}}
\end{array}\right.
\end{equation}
where $\boldsymbol{\Delta}=\left(\Delta_{0},\Delta_{1},...,,\Delta_{2j}\right)$.
For this purpose, we use the method of Lagrange multipliers with the Lagrangian
\begin{equation*}
L = \lVert\boldsymbol{\lambda}-\boldsymbol{\lambda}_0\rVert^2+\mu_{1}\left(\boldsymbol{\lambda\cdot\Delta}-W_{\mathrm{min}}\right)+\mu_{2}\left(1-\sum_{i=0}^{2j}\lambda_{i}\right)
\end{equation*}
where $\mu_{1}, \mu_{2}$ are two Lagrange multipliers to be determined. The stationary points $\boldsymbol{\lambda}^*$ of the Lagrangian must satisfy the following condition 
\begin{equation}\label{eq:LagrangianDerivative}
\frac{\partial L}{\partial\boldsymbol{\lambda}}\Big|_{\boldsymbol{\lambda}=\boldsymbol{\lambda}^*} = \boldsymbol{0} \quad \Leftrightarrow \quad 2\boldsymbol{\lambda}^*+\mu_{1}\boldsymbol{\Delta}-\mu_{2}\boldsymbol{1} = \boldsymbol{0}
\end{equation}
with $\boldsymbol{1}=(1,1,...,1)$ of length $2j+1$. By summing over the components of \eqref{eq:LagrangianDerivative} and using Eq.~\eqref{eq:kernel_eigs_unit_sum}, we readily get
\begin{equation}\label{mu2mu1}
\mu_{2} = \frac{\mu_{1}+2}{2j+1}.
\end{equation}
Then, by taking the scalar product of \eqref{eq:LagrangianDerivative} with $\boldsymbol{\Delta}$ and using Eqs.~\eqref{identity2} and \eqref{mu2mu1}, we obtain
\begin{equation*}
\mu_{1} = \frac{1-(2j+1)W_{\mathrm{min}}}{2j(j+1)}\quad \mathrm{and} \quad \mu_{2} = \frac{(2j+1)-W_{\mathrm{min}}}{2j(j+1)}.
\end{equation*}
Finally, by substituting the above values for $\mu_1$ and $\mu_2$ in Eq.~\eqref{eq:LagrangianDerivative} and solving for the stationary point $\boldsymbol{\lambda}^*$, we get
\begin{equation}\label{rhostar}
\boldsymbol{\lambda}^*=\frac{\left[(2j+1)-W_{\mathrm{min}}\right]\boldsymbol{1}-\left[1-(2j+1)W_{\mathrm{min}}\right]\boldsymbol{\Delta}}{4j(j+1)}
\end{equation}
from which the inner ball radius follows as
\begin{equation*}
r_{\mathrm{in}}^{W_{\mathrm{min}}} = r(\rho^*) = \frac{1-(2j+1)W_{\mathrm{min}}}{2\sqrt{j(2j+1)(j+1)}}
\end{equation*} 
with $\rho^*$ a state with eigenspectrum \eqref{rhostar}.

\subsection{Smallest ball containing all AWB states}
\label{smallestouterball}
We give here the reasoning and evidence for a conjecture on the radius $r_{\mathrm{out}}^{W_\mathrm{min}}$ of the smallest outer ball of the polytope containing all AWB states. With the set of AWB states forming a convex polytope, $r_{\text{out}}^{W_\mathrm{min}}$ must be the radius associated with the outermost vertex.  Hence the problem is equivalent to finding this particular vertex within the minimal polytope.  For convenience let us call the matrix associated to any of the vertices a \textit{vertex state}.  Note, however, that for certain small values of $W_\mathrm{min}$ the polytopes may include trace-1 Hermitian matrices that are not positive semi-definite --- see Eq.\ \eqref{Wmincritical} and Fig.\ \ref{fig:criticalpolytope}.  This generalization however does not affect the analysis for this subsection.

In principle, the outermost vertex can always be determined on a case-by-case basis via the following procedure.  Recall from Sec.\ \ref{subsec:AWPPolytope} that an AWB vertex state with ordered spectrum $\boldsymbol{\lambda}^{\downarrow}$ is specified by $2j+1$ linear constraints on the eigenvalues. The first is normalization, the second is the AWB vertex criterion (i.e.,
Eq.\ \eqref{eq:ordered_awp_eq} with some $W_{\mathrm{min}}$), and the remaining $2j-1$ constraints come from a ($2j-1$)-sized sample from the $(2j)$-sized set of nearest-neighbour constraints \eqref{eq:set_of_NN_constraints}.  Thus the $2j$ states sitting on the $2j$ distinct vertices match up with the $\binom{2j}{2j-1} = 2j$ choices of bi-partitioning the ordered eigenvalues into a ``left'' set, $\boldsymbol{\omega}_n$, of size $n$ and a ``right'' set, $\boldsymbol{\sigma}_n$, of size $2j+1-n$, each of which contain eigenvalues of equal value $\omega_n$ and $\sigma_n$ respectively such that $\omega_n > \sigma_n$.  The full eigenspectrum is the concatenation $\boldsymbol{\lambda}^{\downarrow}_{\mathrm{v}_n} = \boldsymbol{\omega}_n \circ \boldsymbol{\sigma}_n$, and normalization becomes
\begin{equation}\label{eq:vertex_state_normalization}
    n\omega_n + (2j+1-n)\sigma_n = 1, \quad n \in \{ 1,...,2j \}.
\end{equation}
As we are temporarily allowing the ordered spectrum $\boldsymbol{\lambda}^{\downarrow}$ to have negative components, Eq.\ \eqref{eq:vertex_state_normalization} should be interpreted only as requiring the vertices to lie in the affine span generated by the state simplex (i.e., not necessarily within the simplex). Inserting $\boldsymbol{\lambda}^{\downarrow}_{\mathrm{v}_n}$ and \eqref{eq:vertex_state_normalization} into the AWB vertex criterion the weights $\omega_n$ can be solved as a function of the kernel eigenvalues and $W_\mathrm{min}$:
\begin{align}\label{eq:omega_n_explicit}
    \omega_n &= \frac{\sum_{i=n}^{2j} \Delta_i^\uparrow - (2j+1-n) W_\mathrm{min}}{ n \sum_{i=n}^{2j} \Delta_i^\uparrow - (2j+1-n)\sum_{i=0}^{n-1} \Delta_i^\uparrow } \nonumber \\
    &= \frac{\tau_n - (2j+1 - n) W_\mathrm{min}}{(2j+1)\tau_n - (2j+1-n)}
\end{align}
where in the second line we used the unit-trace property \eqref{eq:kernel_eigs_unit_sum} of the kernel and 
\begin{equation}
    \tau_n = \sum_{i=n}^{2j} \Delta_i^\uparrow = \sum_{i=0}^{2j-n} \Delta_i^\downarrow
\end{equation}
is the sum over the largest $2j+1-n$ kernel eigenvalues.  The purity $\gamma_{\mathrm{v}_n}$ and distance $r_{\mathrm{v}_n}$ of the $n$-th vertex is then given by
\begin{align}
    \gamma_{\mathrm{v}_n} &= n \omega_n^2 + (2j+1-n)\sigma_n^2 \\
    r_{\mathrm{v}_n} &= \sqrt{\gamma_{\mathrm{v}_n} - \frac{1}{2j+1}},
\end{align}
which are functions of only the kernel eigenvalues and $W_\mathrm{min}$.  Note that purity, being defined as the sum of squares of the eigenvalues, remains a faithful notion of distance to the MMS even when such spectra are allowed to go negative.  After computing each of these numbers, $r_{\text{out}}^{W_\mathrm{min}}$ would correspond to the largest one, and the set of states satisfying this condition would be the intersection of the associated ball with the state simplex.  In Sec.\ \ref{sec:spin1} we present details of this procedure for $j=1$ and $W_\mathrm{min} = 0$.

Despite this somewhat involved procedure, we numerically find it is always the case that the first vertex, $\mathrm{v}_1$, remains within the state simplex for all $W_\mathrm{min} \in [\Delta^\uparrow_0,\frac{1}{2j+1}]$ and, relatedly, that
\begin{equation}
    r^{W_\mathrm{min}}_{\text{out}} = r_{\text{v}_1}.
\end{equation}
We conjecture this to be true in all finite dimensions.  Part of the difficulty in proving this in general comes from the non-trivial nature of the kernel eigenvalues \eqref{eq:kernel_eigenvalues} and from further numerical evidence suggesting that no vertex state ever majorizes any other vertex state.

Furthermore, with the most negative kernel eigenvalue \eqref{eq:kernel_eigenvalue_assumption} being $\Delta^\uparrow_0 = \Delta_{j,j-1}$, the vertex state $\rho_{\text{v}_1}$ takes the special form 
\begin{equation}\label{eq:outer_vertex_state}
    \omega_1 \ketbra{j,j-1}{j,j-1} + \frac{1-\omega_1}{2j} \sum_{m\neq j-1} \ketbra{j,m}{j,m}
\end{equation}
where
\begin{align}
    \omega_1 &= \frac{\sum_{m\neq j-1}\Delta_{j,m} - 2j W_{\mathrm{min}} }{\sum_{m\neq j-1}\Delta_{j,m} - 2j\Delta_{j,j-1}} \nonumber\\
    &= \frac{1-\Delta_{j,j-1} - 2jW_{\mathrm{min}}}{1-(2j+1)\Delta_{j,j-1}}. \label{eq:outer_vertex_weight}
\end{align}
The minimal outer radius $r_{\text{out}}^{W_\mathrm{min}}$ is then conjectured to be
\begin{align}
    r_{\text{out}}^{W_\mathrm{min}} &= \sqrt{ \gamma_{\text{v}_1} - \frac{1}{2j+1} } \nonumber\\
    &= \sqrt{\omega_1^2 + 2j\left( \frac{1-\omega_1}{2j} \right)^2 - \frac{1}{2j+1}} \nonumber \\
    &= \sqrt{\frac{2j}{2j+1}} \left\lvert \frac{W_{\mathrm{min}}(2j+1) - 1}{\Delta_{j,j-1}(2j+1) - 1} \right\rvert \label{apndx_outer_radius} . 
\end{align}

The radius \eqref{apndx_outer_radius} can be seen as a scaled factor of $\sqrt{2j/(2j+1)}$, the distance from any pure state to the maximally mixed state.  When the highest cutoff is set, $W_{\mathrm{min}} = 1/(2j+1)$, this outer radius vanishes as the only state that can satisfy the cutoff is the maximally mixed state, which has zero distance to itself.  When the lowest cutoff is set, $W_{\mathrm{min}}=\Delta_{j,j-1}$, the scaling factor in \eqref{apndx_outer_radius} becomes unity and the outer radius reduces to the distance to pure states, which reflects the fact that now the entire simplex (and hence all mixed states) is contained within the AWB polytope.

An operational interpretation of this radius is available by noting that the multiqubit realization of the $\ket{j,j-1}$ state, which has the most pointwise-negative Wigner function allowable (occurring at the North pole), is in fact the $W$ state introduced in the context of LOCC entanglement classification \cite{Dur_LOCC_2000}.  And since the maximally mixed state has uniform eigenvalues, Eq.\ \eqref{eq:outer_vertex_state} may be interpreted as the end result of mixing the $W$ state with the maximally mixed state until the Wigner function at the North pole hits $W_\mathrm{min}$.  The distance between the resulting state and the maximally mixed state is exactly our conjectured $r_{\text{out}}^{W_\mathrm{min}}$.  In particular, when the Wigner function vanishes at the North pole, the radius reduces to a tight, purity-based, necessary condition to be AWP.

\bibliographystyle{quantum}
\bibliography{references}

\end{document}